\definecolor{minttext}{RGB}{0,204,171}
\newenvironment{enumeratei}{\begin{enumerate}[label=\textup{(\roman*)}, noitemsep, topsep=1.5mm, labelindent=.8em, leftmargin=*, widest=.]}{\end{enumerate}}
\newenvironment{enumeratenum}{\begin{enumerate}[label=\textup{(\arabic*)}, noitemsep, topsep=1.5mm, labelindent=.8em, leftmargin=*, widest=.]}{\end{enumerate}}
\newenvironment{algorithm-hbox}{\hbadness=10000\begin{algorithm}}{\end{algorithm}}
\newtheorem{theorem}{Theorem}[section]
\newtheorem{corollary}[theorem]{Corollary}
\newtheorem{definition}[theorem]{Definition}
\newtheorem{lemma}[theorem]{Lemma}
\newtheorem{remark}[theorem]{Remark}
\newtheorem{notation}[theorem]{Notation}
\renewcommand{\leq}{\leqslant}
\renewcommand{\geq}{\geqslant}
\renewcommand{\H}{\texttt{hash}}
\newcommand{\pa}{\mathrm{R}_{U_C}}
\renewcommand{\P}{\mathcal P}
\newcommand{\U}{\mathcal U}
\newcommand{\R}{\mathbb R}
\newcommand{\TC}{\mathrm{TC}}
\newcommand{\val}{\texttt{validated}}
\renewcommand{\L}{\mathrm{L}}
\newcommand{\SupMaj}{\rm SupMaj}
\newcommand{\Exist}{\rm Exist}
\newcommand{\fai}{\rm{FAI}}
\newcommand{\pri}{\rm{prime}}
\newcommand{\supp}{\rm supp}
\renewcommand{\frac}{\nicefrac}
\newcommand{\down}[1]{\downarrow_1\mathclose(#1)}
\title[Aleph Consensus Protocol]{ Aleph: A Leaderless, Asynchronous, Byzantine Fault Tolerant Consensus Protocol}
  \author{Adam G\c{a}gol$^{\ast\dagger}$}
  \author{Michal \'Swi\c{e}tek$^{\ast\dagger}$}
  \date{\today}
  \address[$\ast$]{Aleph Zero Foundation, alephzero.org}
  \address[$\dagger$]{Jagiellonian University, Krakow, Poland}
  \keywords{State machine replication, Byzantine fault tolerance, Consensus, Leaderless, Asynchronous systems, Atomic broadcast, Leaderless Threshold Coin}
\begin{document}

\begin{abstract}
    In this paper we propose Aleph, a leaderless, fully asynchronous, Byzantine fault tolerant consensus protocol for ordering messages exchanged among processes.
    It is based on a distributed construction of a partially ordered set and the algorithm for reaching a consensus on its extension to a total order.
    To achieve the consensus, the processes perform computations based only on a local copy of the data structure, however, they are bound to end with the same results.
    Our algorithm uses a dual-threshold coin-tossing scheme as a randomization strategy and establishes the agreement in an expected constant number of rounds.
    In addition, we introduce a fast way of validating messages that can occur prior to determining the total ordering.
    
    {\color{red} This version of the protocol is deprecated. For current version see \cite{glss}}.
\end{abstract}

\maketitle

\section{Introduction}\label{sec-introduction}

State machine replication is a paradigm for implementing fault-tolerant services using distributed servers communicating over an unreliable network.
At the heart of such systems lie consensus algorithms responsible for establishing an agreement among replicas on a value or an action.
In a real life scenario a certain number of replicas can be faulty and the communication between them can suffer delays. 
Therefore, one of the crucial properties of consensus algorithm is their resilience to node failure and network delay. 
The strongest guarantees of such a resilience are, respectively, Byzantine fault tolerance and full asynchronicity.
The former assumes that faulty processes can deviate from the protocol in a completely arbitrary way, from simple crashes, to malicious behavior aimed at disturbing the consensus, fully coordinated between all faulty processes.
Full asynchronicity assumes that delays on messages sent between users can be arbitrarily long, but finite (every message will eventually be delivered).
For an overview of Byzantine consensus algorithms over asynchronous networks see \cite{CVNV}.

One of the first notable examples of consensus algorithms is the original Paxos \cite{Lamport} algorithm and its later incarnation Raft \cite{OO}.
The main downside of their approach is the requirement to choose a leader who is responsible for coordination of establishing consensus.
The problem of such an approach is evident, since at any point in time the designated leader is a single point of failure of the whole protocol.

A different approach, Nakamoto consensus \cite{Sat}, based on blockchain, resolves the problem of choosing a leader in a sense that the process producing a new block is impossible to predict beforehand.
As such, the protocol is immune to a single point of failure.
However, systems based on Nakamoto consensus suffer from serious scalability issues and their intrinsic tradeoff between security and speed \cite{Pass}.
Moreover, due to the longest chain rule, processes can never be sure whether the consensus has been reached, it is only guaranteed with high probability.

Another crucial property of a consensus algorithm is liveness, or a guarantee that agreement is eventually reached between all non-faulty processes.
The celebrated FLP theorem \cite{FLP} states that no deterministic and asynchronous algorithm can achieve liveness in the presence of at least one faulty process.
A standard technique of circumventing FLP theorem is to give up determinism by introducing randomization in a form of coin tossing.
Such randomized Byzantine consensus algorithms can be divided into two main classes based on their randomization strategy.
In the first class, all processes have access to one shared coin; in the second class, each process uses its own local coin.
The main difference between these two classes lies in the expected number of coin tossing rounds required for an algorithm to terminate.
Shared coin protocols can terminate in expected constant number of rounds, while local coin protocols typically require an expected exponential number of rounds.
On the other hand, the shared coin strategy typically requires a trusted dealer to distribute coin shares among the replicas.

Hashgraph \cite{Baird} is an example of a consensus algorithm that satisfies all the aforementioned properties, i.e. it is randomized, leaderless, asynchronous, and Byzantine fault tolerant.
However, it uses a local coin randomization strategy and has an expected exponential number of rounds needed to terminate\footnote{This can be seen by computing the exponentially small probability of the Hashgraph's coin round settling the consensus in case of $\frac{1}{3}N - 1$ Byzantine processes.}.

In this paper, we introduce Aleph, a new distributed consensus algorithm that suffers none of the above shortcomings.
Our protocol is leaderless, asynchronous, Byzantine fault tolerant and guarantees liveness with an expected constant number of rounds needed to terminate.  
We use a technique inspired by the idea of a dual-threshold coin-tossing scheme from \cite{Cachin}, modified in a way that removes the reliance on a trusted dealer in the initialization phase, making the Aleph protocol fully leaderless.
In our communication protocol, processes exchange information by creating and sharing with each other units which can contain arbitrary data.
At the moment of their creation, newer units are connected to the older ones and a structure emerges in a similar fashion to events partially ordered by the happened-before relation introduced by Lamport \cite{Lamport-clocks}.
The total ordering of units is computed locally, based only on the partial order structure, and is guaranteed to be the same for all processes.
This approach, similar to Hashgraph's virtual voting \cite{Baird}, allows for constant average communication complexity when high message throughput is present. 

In addition, we introduce a fast way of validating certain types of messages.
It can be performed prior to establishing the total ordering and may be used in applications such as token transfer systems.

The rest of the paper is organized as follows.
In Section \ref{sec-preliminaries}, we introduce necessary cryptographic tools and formally define the network model.
The basic communication protocol with its key properties is presented in Subsections \ref{subsec-definitions}-\ref{subsec-structural}.
Subsections \ref{subsec-leaderless-coin} and \ref{subsec-timing} are devoted to the core ordering protocol and proving its properties.
Finally, in Subsection \ref{subsec-validation}, we introduce the concept of fast unit validation.

\section{Preliminaries}\label{sec-preliminaries}

\subsection{Cryptographic Primitives}\label{subsec-primitives}

\subsubsection{Digital Signatures}

A digital signature scheme can be used for message authentication.  
The Aleph protocol works on top of any digital signature scheme with the following properties:
\begin{enumeratei}
\item[(1)] provides a verifier a proof that the signed message was created by a known sender (authentication),
\item[(2)] does not allow a signer of a message from denying the creation of the message (non-repudiation), and
\item[(3)] does not allow the message to be altered in any way (integrity).
\end{enumeratei}

\subsubsection{Threshold Coin-Tossing Scheme}\label{coin-description}

The concept of $(n,t,k)$ dual-threshold coin-tossing scheme was first introduced by Cachin, Kursawe and Shoup \cite{Cachin}. 
It allows $n$ parties to share a coin toss in a scenario where up to $t$ of them can be corrupted. 
The parties generate shares of the coin based on a nonce $C$, and $k$ such shares are required to construct the value of the coin for that particular nonce. 
In our scenario, we use $t=\frac{1}{3}N-1$ and $k=\frac{2}{3}N+1$.

The original threshold coin requires a trusted dealer to distribute keys to all participants.
To circumvent this requirement, we use $n$ different threshold coins, one for each process participating in the consensus.
We introduce a notation $TC^i(C)$ to denote the value of the threshold coin dealt by $i$-th process for a nonce $C$.
Moreover, we use $TC^i_j(C)$ to denote the $j$-th process's share of the threshold coin dealt by the $i$-th process for a nonce $C$.

\subsubsection{Common Random Permutation}

A common permutation scheme is a simple procedure producing a sequence of permutations of the processes based on the set of their public keys. 
The algorithm takes a nonce (natural number) and the set of all public keys as input and outputs a pseudo-random permutation of the processes, which cannot be controlled without controlling all the public keys.

For a nonce $k$, the permutation $\sigma_k$ is constructed by first computing $X$ as the XOR of all public keys.
Then, each public key $VK_i$ is iteratively hashed $k$ times to compute the value $X_i = \mathrm{hash}^k(VK_i)$ for $i\in \{1,\dots, n\}$.
Finally, the value $X_i$ is XORed with $X$ to produce $Y_i = X_i \texttt{XOR} X$.
The lexicographic ordering of all $Y_i$'s is the desired permutation $\sigma_k$.
Note that any party that knows all the public keys is able to locally compute such permutation for every nonce $k$.

\subsection{Model and problem statement}\label{subsec-model}

%\subsubsection{Model}\label{subsec-model}
The Aleph protocol operates on a network of $N$ processes and tolerates up to $\frac{1}{3}N-1$ faulty processes.
Faulty processes can deviate from the protocol in an arbitrary way, including both crashes and Byzantine faults.
In other words, we assume $\frac{1}{3}N-1$ processes are controlled by a malicious adversary.
All remaining $\frac{2}{3}N+1$ processes are \emph{correct} and fully follow the protocol.

We assume each pair of processes is connected by a reliable authenticated channel, but the delivery schedule is under full control of the adversary, i.e., we consider a fully asynchronous scenario. 
Moreover, the adversary is able to listen to all communication, but cannot modify its content\footnote{Such channels can be easily emulated on top of unreliable channels via digital signature system and resending transmission or by applying error-correcting codes.}. 

The Aleph protocol aims to establish an atomic broadcast (or total order broadcast) between all processes in the network. 

\begin{definition}\label{def-atomic-broadcast}
    An \emph{atomic broadcast} is a protocol of message broadcasts between processes satisfying the following properties in the presence of a malicious adversary: 
    \begin{enumeratei}
        \item {\bf Correctness} - all correct processes need to agree on validity and ordering of all messages,
        \item {\bf Integrity} - once a message $T$ is agreed to be placed in the ordering, the ordering of $T$ and all messages before it is final, i.e., no more messages can be decided to be earlier than $T$, and
        \item {\bf Liveness} - every message proposed by a correct process will eventually be agreed upon by all correct processes.

    \end{enumeratei}
\end{definition}

\section{Aleph Protocol}\label{sec-aleph-protocol}

\subsection{Basic Definitions}\label{subsec-definitions}

The basic building block of the protocol is a \emph{unit}, a data structure created by a single process and propagated throughout the network.
Units are used as containers for messages, and the protocol uses them to guarantee the consistency of the system.

\begin{definition}
    An object $U$ of a class {\normalfont \texttt{unit}} has the following attributes: 
    \begin{enumeratei}
        \item[] {\normalfont \texttt{U.parents}} - hashes of two parent units, 
        \item[] {\normalfont \texttt{U.message}} - a message contained in the unit,
        \item[] {\normalfont \texttt{U.creator}} - an ID of the process that created $U$,
        \item[] {\normalfont \texttt{U.signature}} - a digital signature of the unit creator,
        \item[] {\normalfont \texttt{U.coinshares}} - a set of threshold coin shares attached to the unit.
    \end{enumeratei}
\end{definition}

There are two special types of units that can contain some additional attributes: the genesis unit and dealing units.
The \emph{genesis unit} is the first created unit.
For each process, a \emph{dealing unit} is the first unit created by that process after the genesis unit.

We also note that all attributes are set by the unit creator and never change.
In addition, by the hash of a unit $U$, denoted as \texttt{hash(U)}, we mean the value of a fixed cryptographic hash function (for example, \texttt{sha256d}) on the bit string representation of the unit object.  We also observe that as time progresses, the set of all units grows, since the algorithm only creates units and never destroys them.

\begin{notation}\label{notation-poset}
    Let $\P^t$ denote the set of all units at a time $t$.
    If the time is not relevant or it is known from the context, we will omit the superscript $t$ and write just $\P$. 
\end{notation}

Since every unit contains hashes of two other units, a structure is imposed on the whole set of units.
This structure is called a \emph{partial order} and is defined in the following way.

\begin{definition}
    Let $U$ and $V$ be units in the set $\P$.  
    We say that $V$ is {\em above} $U$ and denote it as $V \geq U$ if $U = V$ 
    or there exists a sequence of units, $U_0, U_1, \dots, U_n \in \P$, such that $U_0=U,\, U_n=V$, and ${\normalfont\H(U_{i-1}) \in U_{i}\texttt{.parents}}$ for all $1\leq i\leq n$.
\end{definition}

The set $\P$ together with the partial order $\leq$ is called a \emph{partially ordered set}, or a \emph{poset} for short.
We will abuse the notation and call the set $\P$ a poset if it does not lead to confusion.

A set $\mathcal{U}$ of units forms a \emph{chain} in a poset $\P$ if for all 
$U, V \in \mathcal{U}$, $U \leq V$ or $V \leq U$, i.e. any two units in the set are comparable by the relation $\leq$.

Due to the latency of information exchange, processes in the network will not have instantaneous knowledge about every unit.
Therefore, at a time $t$, each process will be aware of only a part of the whole poset $\P^t$ (a great majority of it, but with exclusion of some of the most recent units)
and the following notion arises naturally.

\begin{notation}
    For a process $A$ at a time $t$, we define $A$'s \emph{local view} of the poset $\P^t$ as the part of $\P^t$ that $A$ knows and denote it as $\P^t_A$.
    We use the term \emph{global view} for the whole poset $\P^t$.
\end{notation}

For a set of units $\mathcal{U}$, we define a \emph{supporting set} for $\mathcal{U}$, denoted by $\supp(\mathcal{U})$, to be a set of all processes which created at least one unit in $\mathcal{U}$. 
Since we will often be interested only in the size of the support for the given set, we introduce the following short notation:
\[
    \#_s\, \mathcal{U} = \#\supp\big(\mathcal{U}\big).
\]

\begin{definition}\label{def-high-above}
    Let $U, V \in\P$ be units, $U \leq V$, and $\mathcal{U} = \{W \mid  U \leq W \leq V\}$.
    We say that the unit $V$ is \emph{high above} the unit $U$ if $\#_s\,\mathcal{U} \geq \frac{2}{3} N$.
    We denote this as $V \gg  U$.
    In addition, we will also say that $U$ is {\em high below} $V$.
\end{definition}

The next few definitions introduce additional structure on the poset $\P$ and is used to track information as it is spread throughout the network, as explained further in Lemmas \ref{lemma-22} and \ref{lemma-two-levels-above}.

\begin{definition}
For a unit $U$, the \emph{level} of $U$ is recursively defined as:

\begin{equation*}
    \L(U) =
    \begin{cases*}
      0 & if U \text{ is the genesis unit, or} \\
      m & if $\#_s\Big\{V\ll U: \L(V) =  m\Big\} < \frac{2}{3}N$ \\
      m+1 & otherwise
    \end{cases*}
\end{equation*}
\end{definition}

\noindent where $m = \max\limits_{V<U}\L(V)$.

\begin{definition}
A unit $U$ is a \emph{prime unit} if there does not exist a unit $V<U$ issued by $U$'s creator such that $\L(U) = \L(V)$.
\end{definition}

\begin{definition}
    Let $U$ and $V$ be prime units.
    If $U\ll V$ and $\L(V) = \L(U)+1$, then $U$ is called a \emph{prime ancestor} of $V$.
    Sets of all prime ancestors of a unit $U$ is denoted as $\down{U}$.
\end{definition}

For technical reasons, we will need to sum a series of the following type.

\begin{lemma}\label{lemma-spanish}
    Let $q\in\R$ with $|q|<1$.
    Then
    \[
        \sum_{n \geq 0}nq^n = \tfrac{q}{(1-q)^2}.
    \]
\end{lemma}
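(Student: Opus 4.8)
The plan is to evaluate the series $\sum_{n\ge 0} n q^n$ by relating it to the geometric series $\sum_{n\ge 0} q^n = \frac{1}{1-q}$, which converges for $|q|<1$. The key observation is that $n q^n = q \cdot (n q^{n-1})$ and $n q^{n-1} = \frac{d}{dq} q^n$, so term-by-term differentiation of the geometric series should produce the answer. Concretely, I would start from the identity $\sum_{n\ge 0} q^n = \frac{1}{1-q}$, differentiate both sides with respect to $q$ to obtain $\sum_{n\ge 1} n q^{n-1} = \frac{1}{(1-q)^2}$, and then multiply through by $q$ to get $\sum_{n\ge 1} n q^n = \frac{q}{(1-q)^2}$. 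Since the $n=0$ term vanishes, this equals $\sum_{n\ge 0} n q^n$, which is the claim.

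For a proof that avoids appealing to differentiation of power series (and the attendant justification of interchanging limit and derivative), I would instead give an elementary algebraic argument. Write $S = \sum_{n\ge 0} n q^n$; this series converges absolutely because $|n q^n|$ is eventually dominated by a convergent geometric series (e.g. $|n q^n| \le C r^n$ for any $r$ with $|q| < r < 1$). Then consider $S - qS = \sum_{n\ge 0} n q^n - \sum_{n\ge 0} n q^{n+1} = \sum_{n\ge 1} n q^n - \sum_{n\ge 1} (n-1) q^n = \sum_{n\ge 1} q^n = \frac{q}{1-q}$. Hence $(1-q) S = \frac{q}{1-q}$, and dividing by $1-q$ (nonzero since $|q|<1$) yields $S = \frac{q}{(1-q)^2}$, as desired.

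The main obstacle, such as it is, is purely one of rigor rather than ideas: one must be careful to establish convergence of the series before manipulating it, and to justify the re-indexing in the telescoping-style computation $S - qS$. Both are routine — absolute convergence follows from the ratio or root test, and the re-indexing is a legitimate rearrangement of an absolutely convergent series. Given that the paper only needs this as a throwaway technical tool, I expect the authors to present either the one-line differentiation argument or the short $(1-q)S$ manipulation, and I would lean toward the latter since it keeps the write-up self-contained and elementary.
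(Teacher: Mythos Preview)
Your argument is correct; both the differentiation approach and the $(1-q)S$ telescoping computation are valid and standard. Note, however, that the paper does not actually supply a proof of this lemma at all --- it is stated as a known identity and used later without justification --- so there is no ``paper's own proof'' to compare against. Your write-up simply fills in a detail the authors chose to omit.
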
 

As a notational convention, if $f$ is a function with a domain $X$ and $S\subset X$, we use $f(S)$ to denote the image of $S$ by $f$.

Finally, to make formulas with predicates easier to read, we use the Iverson bracket convention throughout the paper:

\begin{equation*}
    [P] =
    \begin{cases*}
      1 & if predicate $P$ is true\\
      0 & if predicate $P$ is false.
    \end{cases*}
\end{equation*}

\subsection{Communication Protocol}\label{subsec-communication}

The Aleph protocol organizes the exchange of messages by participants (represented as processes) through the use of atomic message containers called units and establishes a consensus on the ordering of all the messages.
The core communication protocol consists of two parts: (1) locally building the poset by adding new units to it (Algorithm \ref{alg_create} \texttt{create\_unit}) and
(2) synchronizing the local views among the processes (Algorithm \ref{alg_sync} $\texttt{sync}$). 
We note that every process in the network starts by running an infinite loop. 
At every iteration, a process runs the $\texttt{create\_unit}$ function, and at every $K$ steps, it runs the $\texttt{sync}$ function, where $K$ is a global parameter.
All the algorithms will be written in the convention that every function will have the identifier of a process as a first parameter.

New units are created by a process $A$ via Algorithm \ref{alg_create}.
Let $V$ be the last unit created by $A$.
First, the process $A$ randomly chooses $U_1$, the first parent of a new unit $U$, to be a maximal element in his local view which is above $V$ (line $1$).
Next, $A$ randomly chooses $U_2$, the second parent, to be any maximal element in its local view (line $2$). 
More specifically, 
\[
    U_1\in\max\{W\in P_A \mid W\geq V\} \text{ and } U_2 \in\max P_A.
\]
Then, the process $A$ chooses a message (it may be empty) and includes it in $U$ (line 3).
Note that we produce new units even if there are no new messages to include.
Then, algorithm \texttt{threshold\_coin} is called, which decides whether a share of the threshold coin should be included in the produced unit and
includes it in such a case (the algorithm \texttt{threshold\_coin} as is explained in detail in Subsection \ref{subsec-leaderless-coin}). 
Finally, the local view of $A$ is updated by adding $U$ to it (line 5).

\begin{algorithm-hbox}[!ht]\caption{\texttt{create\_unit($A$)}}\label{alg_create}
    $U_1\leftarrow$ a random maximal unit in $\P_A$ which is above $A$'s last created unit\\
    $U_2\leftarrow$ a random maximal unit in $\P_A$\\
    Create a unit $U$ with $U_1,\,U_2$ as parents and include a message if available\\
    \texttt{threshold\_coin}($A,U$)\\
    add $U$ to $\P_A$\\
\end{algorithm-hbox}

As described in Algorithm \ref{alg_create} line $1$, each unit created by a process needs to be above all units previously created by this process, i.e. units created by a single process always form a chain in the poset. 
If a faulty process $B$ creates a set $\mathcal F$ of two or more units which share a first parent, then all units from $\mathcal F$ are mutually incomparable and they will be referred to as \emph{forking units}.

The synchronization part is done by means of Algorithm $\texttt{sync}$.
The process $A$ randomly chooses another process $B$ (line $1$) and, if the process $B$ responds, $A$ exchanges information with $B$ about which units each has in its local view (lines $4-5$). 
Using this information, it is possible to exchange only units lacking in $A$'s and $B$'s local views, as done in lines $6-7$.
The received units' signatures are then validated (line $8$) and if no sign of signature forging is detected, the units are added to the local views of $A$.
The update of the local view of a process $A$ by a set of new units $\mathcal{U}_B$ is done one unit at a time in a chosen total ordering extending the ordering of $\mathcal{U}_B$ 
(the choice of this total ordering is irrelevant for the algorithm and does not influence the final total order of messages created by the protocol). 
If at least one signature is not correct, the synchronization is rejected (lines $10-11$) and the connection with $B$ is closed.
The random synchronization scheme of this type is usually referred in the literature as a gossip protocol \cite{Demers}.

\begin{algorithm-hbox}[!ht]
\caption{\texttt{sync($A$)}}\label{alg_sync}
{\bf connect} to a randomly chosen process $B$ and send ID\\
\If {\normalfont received incorrect ID from $B$}
    {
    \Return
    }
{\bf send} information about units in $\P_A$\\
{\bf receive} information about units in $\P_B$\\
{\bf send} set of units $\mathcal{U}_A = P_A\setminus \P_B$\\
{\bf receive} set of units $\mathcal{U}_B = \P_B\setminus \P_A$\\
\If{\normalfont all signatures in $\mathcal{U}_B$ are correct}
    {
        add units in $\U_B$ to the local view $\P_A$\\
    }
\Else
    {
    {\bf reject} synchronization \\
    }
\end{algorithm-hbox}

\subsection{Structural Poset Properties}\label{subsec-structural}

\begin{lemma}\label{lemma-22}
If $U_{0},U_{1}, V_{0}, V_{1}$ are units such that $U_{1} \gg  U_{0}$ and $V_{1} \gg  V_{0} $, then either $U_{1} \geq  V_{0}$ or $V_{1} \geq  U_{0}$. 
\end{lemma}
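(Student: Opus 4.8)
The plan is to argue by contradiction using a counting argument on supporting sets. Suppose, toward a contradiction, that $U_1 \not\geq V_0$ and $V_1 \not\geq U_0$. Set $\mathcal{U} = \{W \mid U_0 \leq W \leq U_1\}$ and $\mathcal{V} = \{W \mid V_0 \leq W \leq V_1\}$. By the definition of "high above" (Definition~\ref{def-high-above}), $\#_s\,\mathcal{U} \geq \frac{2}{3}N$ and $\#_s\,\mathcal{V} \geq \frac{2}{3}N$. Since there are only $N$ processes in total, $\supp(\mathcal{U})$ and $\supp(\mathcal{V})$ must overlap: by inclusion-exclusion, $\#\big(\supp(\mathcal{U}) \cap \supp(\mathcal{V})\big) \geq \frac{2}{3}N + \frac{2}{3}N - N = \frac{1}{3}N > 0$. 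In particular, there is at least one \emph{correct} process $C$ in the intersection, because the adversary controls only $\frac{1}{3}N - 1$ processes.

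Next I would exploit that $C$ is correct, so all units created by $C$ form a chain (as noted after Algorithm~\ref{alg_create}). Let $W_\mathcal{U} \in \mathcal{U}$ and $W_\mathcal{V} \in \mathcal{V}$ be units created by $C$. Since they lie on a common chain, they are comparable; without loss of generality assume $W_\mathcal{U} \leq W_\mathcal{V}$. Then chaining the relations gives
\[
    U_0 \leq W_\mathcal{U} \leq W_\mathcal{V} \leq V_1,
\]
so $U_0 \leq V_1$, i.e. $V_1 \geq U_0$, contradicting our assumption. The symmetric case $W_\mathcal{V} \leq W_\mathcal{U}$ yields $V_0 \leq W_\mathcal{V} \leq W_\mathcal{U} \leq U_1$, hence $U_1 \geq V_0$, again a contradiction. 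Either way we are done.

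The main obstacle — really the only subtle point — is making sure the counting is tight enough to force a \emph{correct} process into the intersection rather than merely a process: this is where the exact thresholds $\frac{2}{3}N$ for support and $\frac{1}{3}N - 1$ for the adversary do the work, and one should double-check the inequality $\frac{1}{3}N > \frac{1}{3}N - 1$ indeed leaves room for an honest process in every case (it does, since the intersection has size at least $\frac{1}{3}N$ which strictly exceeds the number of faulty processes). A secondary point worth stating carefully is why $C$ having a unit in $\mathcal{U}$ and one in $\mathcal{V}$ lets us pick representatives on the same chain — this is immediate from correctness of $C$, but it is the hinge of the argument and deserves an explicit sentence.
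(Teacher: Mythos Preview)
Your proof is correct and follows essentially the same approach as the paper: both arguments intersect the two supporting sets, use the thresholds to find a \emph{correct} process in the intersection, and then exploit that its units form a chain to derive the desired comparability. The only cosmetic difference is that you phrase it as a proof by contradiction and spell out the inclusion--exclusion count explicitly, whereas the paper argues directly.
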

\begin{proof}
Let $\mathcal U$ be the set of units between $U_{0}$ and $U_{1}$, and let $\mathcal{V}$ be the set of units between $V_{0}$ and $V_{1}$. 
Since $  \#_s\, \mathcal{U} \geq \frac{2}{3}N,\   \#_s\, \mathcal{V} \geq \frac{2}{3}N$
and less than $\frac{1}{3}N$ of the processes are faulty, then there exists a correct process $C\in \supp(\mathcal{U})\cap\supp(\mathcal{V})$.
Because $C \in \mathcal{U} \cap \mathcal{V}$, then $C$ had to create units $U, V$ (not necessarily distinct) such that $U_{1}\geq U\geq U_{0}$ and $V_{1}\geq V\geq V_{0}$. 
Since $C$ is correct, all units created by $C$ are totally ordered, so either $U\geq V$ or $V\geq U$. 
The former implies $U_{1} \geq  U\geq V\geq V_{0}$, the latter $V_{1} \geq  V\geq U \geq U_{0}$.
\end{proof}

\begin{lemma}\label{lemma-two-levels-above}
    If $\P$ is a local view and $U,V\in \P$ are units such that $V\gg U$, then $U$ is high below every unit of level $\L(V)+2$ and higher.
\end{lemma}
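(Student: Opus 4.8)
The plan is to show that any unit $W$ with $\L(W) \geq \L(V)+2$ sits high above $U$, i.e.\ that $\#_s\{X \mid U \leq X \leq W\} \geq \frac{2}{3}N$. The natural approach is to exhibit, for $W$ of level $\L(V)+2$, a large set of prime units of level $\L(V)+1$ that lie below $W$ and above $U$, and to count their creators. First I would recall that by the definition of level, since $\L(W) = m+1$ where $m = \max_{X < W}\L(X)$, we must have $m \geq \L(V)+1$, and in fact the recursive clause forces $\#_s\{Y \ll W : \L(Y) = m\} \geq \frac{2}{3}N$ for the level to actually increment; applying this with $m = \L(V)+1$ (after checking $W$ indeed sees enough level-$(\L(V)+1)$ units below it — this needs a small argument that levels do not skip values) gives a set $\mathcal{Y}$ of units of level $\L(V)+1$, each high below $W$, with $\#_s\,\mathcal{Y} \geq \frac{2}{3}N$.

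Next I would bring in the hypothesis $V \gg U$ together with Lemma \ref{lemma-22}. For each $Y \in \mathcal{Y}$ we have $Y \gg W_0$ for the appropriate witness and $\L(Y) = \L(V)+1 > \L(V)$; I want to conclude $Y \geq U$. The key observation is that $Y$, being of level strictly greater than $\L(V)$, must be high above \emph{some} unit of level $\L(V)$, and more to the point I can compare the "high above" witnesses of $Y$ and $V$: applying Lemma \ref{lemma-22} to the pair $(U, V)$ and a pair $(U', Y)$ witnessing that $Y$ is high above a level-$\L(V)$ unit $U'$, one of the two alternatives $V \geq U'$ or $Y \geq U$ must hold; ruling out the first by a level comparison (if $V \geq U'$ and both relate to level $\L(V)$ appropriately, one derives a contradiction with $\L(Y) = \L(V)+1$, or rather one iterates) forces $Y \geq U$. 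Combined with $Y \leq W$, every creator in $\supp(\mathcal{Y})$ has produced a unit in the interval $[U, W]$, so $\#_s\{X : U \leq X \leq W\} \geq \#_s\,\mathcal{Y} \geq \frac{2}{3}N$, which is exactly $W \gg U$. For levels higher than $\L(V)+2$ the statement then follows by the same argument applied inductively (a unit of level $\geq \L(V)+3$ lies above a prime unit of level $\L(V)+2$, which is already known to be high above $U$, and high-above is upward closed in the obvious sense), or directly by running the level-incrementing count from level $\L(V)+1$ upward.

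The main obstacle I anticipate is the level-bookkeeping in the middle step: making rigorous the claim that a unit of level $> \L(V)$ must be high above a unit of level exactly $\L(V)$, and then correctly packaging the witnesses so that Lemma \ref{lemma-22} applies and the unwanted alternative can be excluded. This is essentially an argument that the "high above" relation interacts monotonically with levels — morally, if $Y$ has level $\ell$ then $Y \gg Z$ for some $Z$ of every level $< \ell$ — and establishing that cleanly (probably by induction on $\ell$, unwinding the recursive definition of $\L$) is where the real work lies. Once that structural fact is in hand, the counting and the appeal to Lemma \ref{lemma-22} are routine.
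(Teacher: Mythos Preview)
Your proposal has a genuine gap at the step you yourself flag as the obstacle, and the structural fact you hope will close it does not actually help.

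You want to show that every $Y\in\mathcal{Y}$ (a level-$(\L(V)+1)$ unit high below $W$) satisfies $Y\geq U$, by choosing some $U'$ of level $\L(V)$ with $Y\gg U'$ and applying Lemma~\ref{lemma-22} to $(U,V)$ and $(U',Y)$ to get $V\geq U'$ or $Y\geq U$. But the alternative $V\geq U'$ \emph{cannot} be ruled out by levels: $\L(U')=\L(V)$, so $V\geq U'$ is perfectly consistent with monotonicity of $\L$. The auxiliary claim you propose to establish (``if $\L(Y)=\ell$ then $Y\gg Z$ for some $Z$ of every level $<\ell$'') is true but irrelevant here --- it only lets you \emph{produce} such a $U'$, not exclude $V\geq U'$ once you have one. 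In fact the statement ``every unit of level $\L(V)+1$ is above $U$'' need not hold; it is strictly stronger than what the lemma asserts.

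The paper's proof avoids this by not invoking Lemma~\ref{lemma-22} as a black box but rather re-running its pigeonhole argument with level information built in. Take $\mathcal{U}_W=\{Z:\L(Z)=\L(W)-1,\ Z\ll W\}$ and $\mathcal{U}_{V,U}=\{X:U\leq X\leq V\}$; both have support $\geq\frac{2}{3}N$, so some \emph{correct} process $A$ lies in $\supp(\mathcal{U}_W)\cap\supp(\mathcal{U}_{V,U})$. Then $A$ has units $W_A\in\mathcal{U}_W$ and $U_A\in\mathcal{U}_{V,U}$, which are comparable since $A$ is correct. Now the level disparity bites at the right place: $\L(W_A)=\L(W)-1\geq\L(V)+1>\L(V)\geq\L(U_A)$ forces $W_A\geq U_A$. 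Hence $W\gg W_A\geq U_A\geq U$, and since $[W_A,W]\subseteq[U,W]$ already has support $\geq\frac{2}{3}N$, we get $W\gg U$ directly. Note that this produces \emph{one} unit in $\mathcal{Y}$ above $U$, which is all that is needed; you were trying to show it for all of them.
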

\begin{proof}
    Let $W$ be a unit of level $\L(V)+2+l$ for some $l\geq 0$.
    Then, let $\mathcal{U}_{W}$ be the set of all units of level $\L(V)+1+l$ that are high below $W$ and let $\mathcal{U}_{V,U}$ be the set of all units between $V$ and $U$.
    Both of the sets $\mathcal{U}_{W}$ and $\mathcal{U}_{V,U}$ have support of size at least $\frac{2}{3}N$, so there must be a correct process $A\in \supp(\mathcal{U}_{W})\cap\supp(\mathcal{U}_{V,U})$.
    Since $A$ is correct, it had to create units $W_A\in \mathcal{U}_{W}$ and $U_A\in \mathcal{U}_{U,V}$ such that either $U_A\geq W_A$ or $W_A \geq U_A$.
    Since $\L(W_A) = \L(V)+1+l$ and $\L(U_A) \leq \L(V)$, the former is not possible.
    The latter implies that $W\gg W_A \geq U$. 
    %{\color{red} For consistency within proofs and throughout the paper, it would be better to say that $C$ is a correct process.  Or ... always use $A$.  Quick addendum -- it appears as if $C$ is used a the value for a nonce later.}
\end{proof}

\begin{lemma}\label{lemma-growth}
If more than $\frac{2}{3}N$ processes are correct, then the level of the poset will grow indefinitely with probability $1$, even in an asynchronous scenario.
\end{lemma}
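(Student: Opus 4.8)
The plan is to reduce the claim to two probabilistic statements, after which the conclusion follows from a short induction on the level. Since no unit is ever destroyed and levels never change once assigned, and since $\L(V)\le\L(U)$ whenever $V\le U$ (immediate from the definition: with $m=\max_{W<U}\L(W)$ one has $\L(U)\in\{m,m+1\}\ge\L(V)$), it suffices to show that almost surely, for every $M$, some unit of level $>M$ is eventually created.

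The first probabilistic statement is \emph{propagation}: with probability $1$, every unit created by a correct process is eventually added to the local view of every correct process. This follows from the gossip structure of Algorithm~\ref{alg_sync} --- there are finitely many processes and each correct process calls \texttt{sync} infinitely often with a uniformly random partner, so by Borel--Cantelli it synchronizes infinitely often with every other correct process, and the reliable-channel assumption guarantees that each such synchronization terminates after finitely many deliveries and transfers all missing units. The second statement is that with probability $1$, for every correct process $A$ and every unit $X$ already created by a correct process, $A$ eventually creates a unit $U$ with $X\ll U$, i.e.\ $\#_s\{W\mid X\le W\le U\}\ge\frac{2}{3}N$. Since the units created by a correct process form a chain and the relation $X\ll U$ is preserved when $U$ is replaced by anything above it, this automatically yields, for any finite family $X_1,\dots,X_k$ of such units, a single later unit of $A$ that lies high above all of them.

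Granting both statements, fix a realization in the intersection of these probability-one events and suppose for contradiction that the maximum level attained is a finite $M$. By a straightforward induction on $\ell$ --- at each step feeding $\frac{2}{3}N$ level-$(\ell-1)$ units into a new unit via the second statement, or else propagating an already-existing higher unit --- one shows that eventually every correct process creates a unit of level $\ge\min(\ell,M)$; with $\ell=M$ this means that eventually more than $\frac{2}{3}N$ (all correct) processes have created level-$M$ units, each of which is then seen by every correct process. Pick such units $P_{C_1},\dots,P_{C_k}$ for $k\ge\frac{2}{3}N$ distinct processes and let $A$ be any correct process. By the second statement $A$ eventually creates a unit $U$ with $P_{C_i}\ll U$ for all $i$; then $\#_s\{V\ll U:\L(V)=M\}\ge k\ge\frac{2}{3}N$ while $\max_{V<U}\L(V)=M$ (it is $\ge M$ because $U$ lies above level-$M$ units, and $\le M$ by the contradiction hypothesis), so $\L(U)=M+1$, contradicting maximality. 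That $\gg$-relations stack up in this way, so that one unit can be high above many others at once, is exactly the phenomenon captured by Lemmas~\ref{lemma-22} and~\ref{lemma-two-levels-above}.

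The main obstacle is the second probabilistic statement. In \texttt{create\_unit} both parents are drawn uniformly at random from the set of \emph{all} maximal units in the creator's local view, and a Byzantine process can create, and have delivered, an unbounded number of mutually incomparable forking units, inflating that set; so the probability that a given invocation selects a parent above a prescribed unit $X$ need not be bounded away from zero, and a naive Borel--Cantelli argument breaks down. Making it rigorous should use the extra structure that the \emph{first} parent is always taken above the creator's own previous unit, so a creator's units form a chain of non-decreasing level which, through mutual gossip, keeps absorbing the other correct processes' recent units --- together with either a bound on the rate at which the adversary can inject delivered units relative to the rate of correct unit creation, or the convention that correct processes disregard the equivocating units of any creator once a fork by it has been observed. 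Under such a provision the per-invocation success probability is bounded below, a second Borel--Cantelli argument closes the gap, and the remaining bookkeeping in the induction above is routine.
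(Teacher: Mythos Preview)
Your overall strategy matches the paper's: both argue that random gossip among the $>\frac{2}{3}N$ correct processes eventually produces enough ``high above'' relations to push the maximum level up by one, and then iterate. The paper packages this into an informal notion of \emph{atomic rounds} (intervals in which $\frac{2}{3}N$ correct processes have each created a unit and synced with another correct process), asserts without further detail that ``information about new units spreads exponentially in the network in terms of atomic rounds,'' and then uses a slightly more economical final step than yours: starting from a single level-$l$ unit $U$, it builds $V\gg U$ and then $W\gg V$, observing that every unit between $U$ and $V$ already has level $l$ and is high below $W$, so $W$ is high above level-$l$ units from $\frac{2}{3}N$ processes. Your induction-and-collect-$\frac{2}{3}N$-level-$M$-units argument reaches the same endpoint with a bit more bookkeeping.

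The substantive difference is that you isolate and openly flag the step the paper hand-waves: your ``second probabilistic statement'' (that a correct process eventually creates a unit high above any prescribed unit) is exactly what the paper's ``exponential spread'' claim covers, and the paper gives no more justification for it than you do. Your concern about Byzantine flooding of the maximal set affecting the random choice of $U_2$ is legitimate and is simply not addressed in the paper's proof either; the paper's argument is at the same level of rigor as your sketch, just less candid about it. So your proposal is not weaker than the paper's own proof --- it is a cleaner decomposition of the same informal argument, with the soft spot correctly identified rather than papered over.
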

\begin{proof}
    Let $\mathcal{P}$ be the global view at some time $t$, and let $\L(\mathcal{P}) = l$.
    We will show that with probability $1$ the poset will grow to level $l+1$. 
    Let $\mathcal{H}$ be the set of all correct processes, and let an \emph{atomic round} be an interval in which at least $\frac{2}{3}N$ processes in $\mathcal{H}$ have issued a unit and synchronized with another process in $\mathcal{H}$. 
    Since correct processes are choosing partners to synchronize randomly, consecutive atomic rounds will continue to occur after time $t$ with probability $1$. 
    
    Let $U$ be a unit in $\mathcal{P}$. 
    Note that information about new units spreads exponentially in the network in terms of atomic rounds.
    Hence, in average, after a logarithmic number of atomic rounds, a unit $V$ will be created that is high above $U$. 
    Let $\mathcal{U}_{U,V}$ be the set of units between $U$ and $V$.
    
    At the time when a unit $W\gg V$ is created, the poset will reach the level $l+1$, since $W$ needs to be high above every $V\in \mathcal{U}_{U,V}$ via transitivity and $\#_s\, \U_{U,V} \geq \frac{2}{3}N$.
\end{proof}

\subsection{Leaderless Threshold Coin}\label{subsec-leaderless-coin}

Usually threshold coin-tossing schemes require a trusted party to secretly deal pairs of private and public keys in the network (see Subsection \ref{coin-description}). 
Since Aleph is designed to work in a leaderless setting, where no such trusted party exists, the protocol requires additional steps to construct a Leaderless Threshold Coin. 

Roughly speaking, the construction will provide access to the common coin for every user at times when the coin is needed during the total ordering procedure.
%Moreover, the construction ensures that at least half of the coin flips will be fair. 
The core idea is to make every process in the network deal its own threshold coin and then alternate between them, ensuring that at least half of them will be fair. 
We introduce a notation $TC^i(C)$ to denote the value of the threshold coin dealt by $i$-th process for a nonce $C$.
Moreover, we use $TC^i_j(C)$ to denote $j$-th process's share of the threshold coin dealt by $i$-th process for a nonce $C$.

%To make reading algorithm \ref{algo-threshold-coin} simpler, we will introduce few auxiliary definitions. 
The keys required to construct the threshold coin shares of $TC^i$ will be enclosed by process $i$, the coin's \emph{dealer}, in its first unit produced after the genesis unit which we call the \emph{dealing unit}.

The following notion from the hypergraph theory will be required. 
Let $X$ be a set, and $\mathcal F = \{S_1,\dots, S_k\}$ be a family of subsets such that $S_i\subseteq X$ (i.e., $X$ is a set of vertices and $S$ is a set of hyperedges).
A set $T\subseteq X$ is called a \emph{transversal} of $\mathcal F$ if it has a nonempty intersection with all $S_i$'s. 

When a process $A$ runs Algorithm \ref{algo-threshold-coin}, $A$ first checks whether $U$ is $A$'s first unit after the genesis unit (line $2$). 
In case it is, for every other process $B$, $A$ creates a pair of validation and secret keys for $B$'s share of the coin dealt by $A$ (line $5$), encrypts it with $B$'s public key and embeds the pair within the unit $U$ (line $6$).  
Next, $A$ checks if $U$ is a prime unit of level greater than $4$ (line $7$).
If $U$ satisfies these conditions, $A$ creates the family $\mathcal{F}$ of sets of dealing units below $V$, for every unit $V$ of level less than or equal $\L(U)-3$ which is below $U$ (lines $8-11$). 
Note that here the set of dealing units below each $V$ is a hyperedge and that $\mathcal{F}$ is a set of hyperedges.
%{\color{red} Quick question -- is this {\bf A} set containing all hyperedges, i.e. every element is a hyperedge, or is it {\bf THE} set containing all hyperedges?} 
Then, a transversal $\mathcal{T}_U$ of $\mathcal{F}$ is constructed iteratively (lines $12-21$),
i.e. new dealing units are added to $\mathcal T_U$ in the order defined by the permutation $\sigma_{\L(U)}$ until they form a transversal. 
The set $\mathcal T_{TC}$ of shares of coins dealt in units forming the transversal $\mathcal T_U$ is then embedded in $U$ (line $22$).

\begin{algorithm-hbox}[!ht]\label{treshold_coin}
\caption{\texttt{threshold\_coin($A, U$)}}\label{algo-threshold-coin}
$i\leftarrow$ $A$'s index\\
\If{{\normalfont$U$ is $A$'s first unit after the genesis unit} }
{
\For{\normalfont$B\in$ processes}
{
$j\leftarrow$ $B$'s index\\
{\bf create} a pair $(VK_j^i,SK_j^i)$ of keys for $B$'s share of coin $TC^i_j$\\
{\bf embed} $(VK_j^i, SK_j^i)$ encrypted with $B$'s public key $PK_B$ within $U$
}
}
\If{{\normalfont$U \in \P_A$ is a prime unit and $\L(U)\geq 4$}}
    {
    $\mathcal F \leftarrow \emptyset$\\
    \For{\normalfont $V\in \big\{W\in\P_A : W \leq U,\, \L(W) \leq \L(U) - 3,\, W\in \pri(P)\big\}$}
        {
        $S\leftarrow$ set of all dealing units below $V$\\
        {\bf append $S$ to $\mathcal F$} 
        }
    $\mathcal T_U \leftarrow \emptyset $\\
    $\mathcal T_{TC} \leftarrow \emptyset$\\
       $k \leftarrow 1$\\
    \While{$\mathcal T_U$ \normalfont is not a transversal of $\mathcal F$}
        {
        $j\leftarrow \sigma_{\L(U)}(k)$\\
        $B\leftarrow j$-th process\\
        \If{\normalfont$B$'s dealing unit is in $\cup\mathcal F$}
            {
            {\bf append} $B$'s dealing unit to $\mathcal T_U$\\
            {\bf embed} $TC_{i}^{j}(\L(U))$ within $\mathcal T_{TC}$
            }
        $k \leftarrow k+1$ \\
        }
    {\bf append} $\mathcal T_{TC}$ to $U$
}
\end{algorithm-hbox}

While it may look like the protocol requires a lot more overhead due to the necessity of storing coin shares in units, 
the following lemma shows that only a very small number of such shares are necessary on average.

\begin{lemma}\label{trans-size}
    The expected number of units in the transversal $\mathcal T_U$ constructed in Algorithm \ref{algo-threshold-coin} is not bigger than $\frac{3}{2}$.
\end{lemma}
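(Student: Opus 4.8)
The plan is to note that the only randomness in Algorithm~\ref{algo-threshold-coin} is the common permutation $\sigma_{\L(U)}$, and then to bound the expected length of the portion of that permutation that gets absorbed into $\mathcal T_U$. Given the local view $\P_A$ and the prime unit $U$ (of level at least $4$), the hyperedge family $\mathcal F$ and the set $\bigcup\mathcal F$ of dealing units appearing in its hyperedges are fixed; the \texttt{while}-loop scans the processes in the order $\sigma_{\L(U)}$ and appends to $\mathcal T_U$ precisely the scanned dealing units that belong to $\bigcup\mathcal F$, stopping as soon as $\mathcal T_U$ hits every $S\in\mathcal F$. Thus $\mathcal T_U$ is the shortest prefix of the (deterministic) list of dealing units of $\bigcup\mathcal F$ enumerated in $\sigma_{\L(U)}$-order that already forms a transversal of $\mathcal F$, and by the symmetry of a uniformly random permutation the first $n$ dealing units it examines form a uniformly random $n$-element subset of $\bigcup\mathcal F$. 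The goal is to bound the expected size of this prefix by $\tfrac32$.

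The structural heart of the argument is the claim that every hyperedge is ``large'': for each prime unit $V\le U$ with $\L(V)\le\L(U)-3$, the set $S_V$ of dealing units below $V$ contains the dealing units of more than $\tfrac23N$ of the processes. I would establish this by unwinding the recursive definition of $\L(V)$ and translating ``high below'' into ordinary comparability via Lemma~\ref{lemma-two-levels-above}: the level condition for $V$ produces, at each level from $\L(V)$ down to $0$, a set of units of that level lying high below $V$ with support of size at least $\tfrac23N$, and a correct process contributing such a unit must have created its dealing unit no later, hence below that unit and therefore below $V$, so that dealing unit lies in $S_V$. Pushing this from the immediate bound (dealing units of more than $\tfrac13N$ processes lie below $V$) up to the clean $\#_s S_V>\tfrac23N$ is exactly where the three-level gap $\L(U)-\L(V)\ge 3$ has to be used, and this is the step I expect to be the main obstacle.

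With largeness in hand, the rest is a short tail computation. Since each $S_V$ contains more than $\tfrac23N$ of the at most $N$ relevant dealing units, a uniformly random dealing unit from $\bigcup\mathcal F$ lies in a fixed $S_V$ with probability more than $\tfrac23$; for a transversal it suffices to meet the inclusion-minimal hyperedges, and one argues that the conditional probability that a freshly added uniformly random dealing unit fails to complete the current partial transversal is at most $\tfrac13$ --- the delicate point being to make this estimate uniform over the whole evolving family of still-uncovered hyperedges, not just one fixed hyperedge. This yields $\Pr[\,|\mathcal T_U|>n\,]\le 3^{-n}$ for every $n\ge0$, whence
\[
  \mathbb E\,|\mathcal T_U|\;=\;\sum_{n\ge0}\Pr[\,|\mathcal T_U|>n\,]\;\le\;\sum_{n\ge0}3^{-n}\;=\;\tfrac32 ,
\]
or, phrasing it as $\mathbb E\,|\mathcal T_U|=\sum_{k\ge1}k\,\Pr[\,|\mathcal T_U|=k\,]$ and summing the geometric-type series with Lemma~\ref{lemma-spanish}, the same constant. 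In short, once the hyperedges are known to be large, the bound $\tfrac32$ drops out of a geometric series; proving that largeness, and controlling the union over the full family $\mathcal F$ rather than a single hyperedge, is where the real work lies.
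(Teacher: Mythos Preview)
Your overall strategy---show the hyperedges are large and then sum a geometric series---matches the paper's, but you take a harder road than necessary. The paper does not argue that each hyperedge $S_V$ individually has support at least $\tfrac23N$ and then wrestle with the union over the whole family $\mathcal F$. Instead, it claims (via Lemma~\ref{lemma-two-levels-above}) that there is a \emph{single} set of at least $\tfrac23N$ dealing units contained in \emph{every} hyperedge of $\mathcal F$ simultaneously. Once the intersection $\bigcap_{S\in\mathcal F}S$ is known to have that size, the very first dealing unit scanned in $\sigma_{\L(U)}$-order is already a transversal by itself with probability at least $\tfrac23$, and the bound $\Pr[\,|\mathcal T_U|>n\,]\le(\tfrac13)^n$ is immediate---no need to track which hyperedges remain uncovered as the prefix grows.

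This is exactly the step you flag as ``where the real work lies,'' and your worry is justified: the tail estimate $\Pr[\,|\mathcal T_U|>n\,]\le 3^{-n}$ does \emph{not} follow from the weaker hypothesis that each $S_V$ is individually large. Conditioning on the first $n$ draws having missed \emph{some} hyperedge, the identity of that uncovered hyperedge depends on those draws, and bounding the next draw's failure probability by $\tfrac13$ uniformly is not automatic (indeed, one can cook up hypergraphs with all edges of density $\tfrac23$ for which it fails). The paper's intersection observation dissolves the obstacle, since a draw landing in the common core hits every hyperedge at once. So the structural claim you should aim for is not ``each $S_V$ is large'' but rather ``$\bigcap_V S_V$ is large,'' and the invocation of Lemma~\ref{lemma-two-levels-above} should be directed at showing that a fixed batch of dealing units lies below \emph{every} prime $V\le U$ of level at most $\L(U)-3$, not merely below one such $V$ at a time.
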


\begin{proof}
    Let $A$ be a process, $U$ be a prime unit created by $A$ of level at least $4$, and $\mathcal{F}$ be the family created as in Algorithm \ref{algo-threshold-coin} for which the transversal $\mathcal T_U$ is constructed. 
    Moreover, let $V\in \P_A$ be a prime unit such that $U\geq V$ and $1\leq \L(V)\leq \L(U)-3$.
    Due to Lemma \ref{lemma-two-levels-above}, every unit of level $\L(U)$ is high above every unit high below $V$.
    Hence, the set $\down{V}$ of at least $\frac{2}{3}N$ dealing units is in every set in $\mathcal{F}$.
    Thus, in any permutation $\sigma$, every dealing unit has a probability at least $\frac{2}{3}$ of being the transversal by itself.
    By summing the geometric sequence, we obtain that the transversal is constructed after at most $\frac{3}{2}$ elements on average.
\end{proof}

Due to the way coin shares are included in the units, not every coin value will be available on every level. 
To control which coins will be available, we introduce the following notion.

\begin{definition}
    Let $k\geq 1$ and $V$ be a prime unit.
    We define a \emph{first available index for level $k$ and the unit $V$} ($\fai_k(V)$ for short)
    to be the lowest index $i$ in $\sigma_k$ such that $i$-th process's dealing unit is below $V$.
\end{definition}

\begin{lemma}\label{lemma-coin-share}
    Let $A$ be a process, $\P_A$ be a local view, let $V\in \P_A$ be a prime unit, let $U\in \P_A,\ U\geq V$ be a prime unit of level at least $\L(V)+3$, created by the $i$-th process.
    Then, $U$ contains the coin share $TC_i^{\fai_{\L(U)}(V)}(\L(U))$.
\end{lemma}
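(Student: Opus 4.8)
The goal is to show that the prime unit $U$ (created by process $i$, of level at least $\L(V)+3$, lying above the prime unit $V$) contains the specific coin share $TC_i^{\fai_{\L(U)}(V)}(\L(U))$. Let $j = \fai_{\L(U)}(V)$ be the first index in the permutation $\sigma_{\L(U)}$ whose dealer's dealing unit is below $V$. The plan is to trace through Algorithm \ref{algo-threshold-coin} as run by process $A = i$ on the unit $U$, and verify two things: first, that $U$ reaches the transversal-construction block at all; and second, that when the \texttt{while} loop scans indices in the order $\sigma_{\L(U)}(1), \sigma_{\L(U)}(2), \dots$, the index $j$ is encountered, and at that moment the loop has not yet terminated, so the $j$-th process's dealing unit is appended to $\mathcal{T}_U$ and the share $TC_i^j(\L(U))$ is embedded in $U$.

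First I would check the guard on line 7: $U$ is a prime unit by hypothesis, and since $\L(U) \geq \L(V)+3 \geq 1+3 = 4$ (using that $V$ is a prime unit, hence of level at least $1$ unless $V$ is genesis — one should note the genesis unit is not prime in the relevant sense, or simply that $\fai$ is defined for prime units with a dealing unit below, forcing $\L(V)\geq 1$), the condition $\L(U)\geq 4$ holds. So the block $8$–$22$ executes. Next, I would examine the family $\mathcal{F}$: it consists of the sets of dealing units below $W$, ranging over prime units $W \leq U$ with $\L(W) \leq \L(U)-3$. Since $\L(V) \leq \L(U)-3$ and $V \leq U$ and $V$ is prime, $V$ itself contributes a hyperedge $S_V = \{\text{dealing units below } V\}$ to $\mathcal{F}$. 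By definition of $\fai$, every index below $j$ in $\sigma_{\L(U)}$ has its dealing unit \emph{not} below $V$, i.e., not in $S_V$; hence for $k < $ (position of $j$), the dealing unit of $\sigma_{\L(U)}(k)$ misses $S_V$, so $\mathcal{T}_U$ cannot yet intersect $S_V$ and therefore cannot be a transversal of $\mathcal{F}$. This means the \texttt{while} loop does not terminate before $k$ reaches the position of $j$.

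The remaining point is that at step $k$ with $\sigma_{\L(U)}(k) = j$, the \texttt{if} on line $17$ fires: we need the $j$-th process's dealing unit to be in $\cup \mathcal{F}$. But by definition of $j = \fai_{\L(U)}(V)$, the $j$-th process's dealing unit \emph{is} below $V$, hence it lies in $S_V \subseteq \cup\mathcal{F}$. So the dealing unit is appended to $\mathcal{T}_U$ and $TC_i^j(\L(U)) = TC_i^{\fai_{\L(U)}(V)}(\L(U))$ is embedded into $\mathcal{T}_{TC}$, which on line $22$ is appended to $U$. This completes the argument.

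\textbf{Main obstacle.} The part requiring the most care is handling the edge cases around levels — specifically, ensuring $\L(V) \geq 1$ so that $\L(U) \geq 4$ and so that $V$'s hyperedge is genuinely present in $\mathcal{F}$ (the family ranges over levels $\leq \L(U)-3$ and, implicitly from the algorithm's \texttt{for} loop, one should confirm there is no lower cutoff excluding $V$, or invoke that $\fai$ presupposes a dealing unit below $V$ which forces $V$ to be above the genesis layer). One must also confirm that $S_V$ is nonempty — but this is exactly Lemma \ref{lemma-two-levels-above} territory: since $V$ is prime of positive level, at least $\frac{2}{3}N$ dealing units sit below $V$, so $\fai_{\L(U)}(V)$ is well-defined and $S_V \neq \emptyset$. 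Everything else is a direct, if slightly tedious, walk through the pseudocode, and I would present it as such rather than belaboring it.
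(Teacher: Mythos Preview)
Your proposal is correct and follows essentially the same approach as the paper: the paper's proof is a terse two-sentence version of exactly your argument, observing that the set of dealing units below $V$ is a hyperedge in $\mathcal{F}$ and that shares are added in $\sigma_{\L(U)}$-order until a transversal is formed, so the $\fai_{\L(U)}(V)$-th share must appear. Your write-up simply unpacks this into an explicit walk through the pseudocode and handles the level edge case; one minor notational slip is that the $A$ in the lemma statement is the observing process, not necessarily the creator of $U$ --- the algorithm is run by the $i$-th process in its own local view at creation time (which necessarily contains $V$ since $V\leq U$), so you should write ``process $i$'' rather than ``process $A=i$'' there.
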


\begin{proof}
    In Algorithm \ref{algo-threshold-coin}, the procedure of adding coin shares to the unit $U$ continues until a transversal of $\mathcal{F}$ is constructed.
    Since the set of all dealing units below $V$ belongs to $\mathcal{F}$ and coin shares are added to $U$ in the order defined by $\sigma_{\L(U)}$,
    we have that the $\fai_{\L(U)}(V)$-th share of the coin of $i$-th process for the nonce $\L(U)$, or in our notation $TC_i^{\fai_{\L(U)}(V)}(\L(U))$, has to be contained in the unit $U$.
\end{proof}

\begin{lemma}\label{lemma-coin-compute}
    Let $A$ be a process and let $V\in \P_A$ be a prime unit.
    Then, for every $k$ such that $\L(\P_A) > k \geq \L(V)+3$, the value of the coin $TC^{\fai_k(V)}(k)$ is possible to compute in $\P_A$.
\end{lemma}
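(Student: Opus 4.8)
The plan is to reduce the statement to Lemma~\ref{lemma-coin-share} together with the defining threshold $k=\frac{2}{3}N+1$ of the coin-tossing scheme. Fix $k$ with $\L(\P_A) > k \geq \L(V)+3$. I want to exhibit at least $\frac{2}{3}N+1$ distinct shares of the coin $TC^{\fai_k(V)}(k)$ that are all computable from units in $\P_A$; by the definition of the $(N,\frac{1}{3}N-1,\frac{2}{3}N+1)$ dual-threshold scheme, that many shares suffice to reconstruct the coin value.

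First I would identify the set $\mathcal{W}$ of all prime units of level exactly $k$ that lie in $\P_A$ and satisfy $W \geq V$. Since $k \geq \L(V)+3$ and $k < \L(\P_A)$, such units exist: indeed by Lemma~\ref{lemma-two-levels-above} every unit of level $\L(V)+2$ or higher is high above $V$ once $V\gg$ something below it — more carefully, I would argue that all but at most $\frac{1}{3}N-1$ processes (the faulty ones, plus possibly a few whose dealing units are not below $V$) create a prime unit of level $k$ lying above $V$, so $\#_s\,\mathcal{W} \geq \frac{2}{3}N+1$. For each such $W$, created by process $i_W$, Lemma~\ref{lemma-coin-share} applies (with this $W$ in the role of the ``$U$'' there, noting $\L(W)=k\geq \L(V)+3$), giving that $W$ contains the share $TC_{i_W}^{\fai_k(V)}(k)$. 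The key point is that $\fai_k(V)$ depends only on $\sigma_k$ and on which dealing units are below $V$ — data visible in $\P_A$ — so it is the *same* index across all these $W$; hence all these shares are shares of the single coin $TC^{\fai_k(V)}(k)$.

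Next I would count distinct shares. Because the creators $i_W$ of units in $\mathcal{W}$ are pairwise distinct (we took one prime unit of level $k$ per process in $\supp(\mathcal{W})$, and a correct process has a unique prime unit per level), the shares $TC_{i_W}^{\fai_k(V)}(k)$ are indexed by at least $\frac{2}{3}N+1$ distinct processes $i_W$, i.e.\ we have at least $k_{\mathrm{thr}} = \frac{2}{3}N+1$ distinct coin shares, all present in units of $\P_A$ and hence computable there (the secret keys needed to produce the shares were dealt in the corresponding dealing units, which are below $V \leq W \in \P_A$, so $A$ can decrypt its own and read off the rest). By the reconstruction property of the threshold scheme, these determine $TC^{\fai_k(V)}(k)$, which is therefore computable in $\P_A$.

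The main obstacle I anticipate is the combinatorial step establishing $\#_s\,\mathcal{W} \geq \frac{2}{3}N+1$, i.e.\ that enough correct processes have a prime unit of level exactly $k$ above $V$ inside the *local* view $\P_A$. This needs (a) that $\L(\P_A) > k$ forces every correct process to have already produced a prime unit of every level $\le k$ visible in $\P_A$ — which should follow from the level definition and an argument like the one in Lemma~\ref{lemma-growth}/Lemma~\ref{lemma-two-levels-above} — and (b) that such a prime unit is automatically $\ge V$, which uses $k \geq \L(V)+3 \geq \L(V')+2$ for the unit $V'$ that $V$ is high above, via Lemma~\ref{lemma-two-levels-above}. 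Getting the bookkeeping on ``correct processes minus those whose dealing unit is not below $V$'' right, so that the count lands at $\frac{2}{3}N+1$ rather than something smaller, is the delicate part; everything else is a direct invocation of Lemma~\ref{lemma-coin-share} and the definition of the threshold coin.
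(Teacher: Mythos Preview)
Your high-level strategy is exactly the paper's: exhibit at least $\frac{2}{3}N$ prime units of level $k$ in $\P_A$ that lie above $V$, then invoke Lemma~\ref{lemma-coin-share} on each to harvest enough shares of $TC^{\fai_k(V)}(k)$. Where you diverge is in the ``delicate part,'' and there the argument as written does not go through. First, claim (a) is false in an asynchronous model: $\L(\P_A)>k$ does \emph{not} force every correct process to have a level-$k$ prime unit already visible in $\P_A$; some correct processes may simply be slow. Second, your appeal to Lemma~\ref{lemma-two-levels-above} is inverted: that lemma says that if $W\gg V$ then $V$ is high below every unit of level $\L(W)+2$; you instead look for ``a unit $V'$ that $V$ is high above,'' which yields conclusions about $V'$, not about $V$. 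Third, the aside about ``processes whose dealing unit is not below $V$'' is a red herring; whether a process's dealing unit sits below $V$ governs $\fai_k(V)$, not whether that process's level-$k$ prime unit is above $V$.

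The clean route, which is what the paper does, avoids all of this bookkeeping. From $\L(\P_A)>k$ there is a unit of level $k+1$ in $\P_A$; by the definition of level it has at least $\frac{2}{3}N$ prime units of level $k$, created by distinct processes, high below it. Call this set $S$. To see that every $U\in S$ satisfies $U\geq V$, pick any prime unit $W$ of level $\L(V)+1$ with $W\gg V$ (e.g.\ $A$'s own level-$(\L(V)+1)$ prime unit), and apply Lemma~\ref{lemma-two-levels-above} in the correct direction: since $\L(U)=k\geq \L(V)+3=\L(W)+2$, the unit $U$ is high above $V$, in particular $U\geq V$. Now Lemma~\ref{lemma-coin-share} applies to each $U\in S$, giving $\frac{2}{3}N$ shares of the single coin $TC^{\fai_k(V)}(k)$, which meets the reconstruction threshold. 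No counting of faulty processes or dealing-unit positions is needed.
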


\begin{proof}
    Since $\L(\P_A) > k$, $\P_A$ contains a set $S$ of at least $\frac{2}{3}N$ prime units at level $k$ produced by different processes.
    Fix $U\in S$ and let $W$ be a prime unit of level $\L(V)+1$ created by $A$.
    By applying Lemma \ref{lemma-two-levels-above} to $W$ and $V$, we obtain that $U \geq V$, as $\L(U)\geq \L(W)+2$.

    Hence, we obtain by Lemma \ref{lemma-coin-share} that, every unit in $S$ contains its share of the coin $TC^{\fai_k(V)}(k)$,
    which allows us to reach the required threshold and compute the coin value.
\end{proof}
%\begin{lemma}
%Let $P$ be a local view, $U_c\in P$ be a prime unit and $k$ be a number such that $k > \L(P)$. Then the value of coin $TC^{\texttt{common}(\sigma_k,U_c)}(k)$ is impossible to compute in $P$, provided that the coin $TC^{\texttt{common}(\sigma_k,U_c)}$'s keys were properly dealt.
%\end{lemma}

%\subsection{Unanimous Choice of Reference Units}\label{subsec-reference}

\subsection{Timing units}\label{subsec-timing}

The algorithm for establishing the total ordering of all units has two stages. 
First, it chooses some prime units to be \emph{timing units} (Definition \ref{timing-unit}), which serve as common reference points for the total ordering.
Then, all the remaining units are totally ordered with respect to the timing units.
The main challenge is to guarantee a consistent choice of the timing units.
We want to ensure that different processes choose the same timing units despite differences in their local views.

In this subsection, we define auxiliary notions needed for choosing timing units and describe their fundamental properties.
The idea for the functions $\Pi$ and $\Delta$ as well as the use of the threshold coin for breaking stalemates was inspired by the ABBA protocol introduced in \cite{Cachin}.
The algorithm for the total ordering is presented and analyzed in the next subsection.

On every level $l$ of a poset, we choose one timing unit among all prime units on that level.
The choice is done in two phases:
\begin{enumeratenum}
\item in the \emph{decision phase}, for every prime unit of level $l$, the decision on its feasibility as a timing unit is made by means of the function $\Delta$, and
\item in the \emph{choice phase}, among all feasible prime units of level $l$, the one which is first in the permutation $\sigma_l$ is chosen to be the timing unit for that level.
\end{enumeratenum}

The decision phase is performed for every prime unit separately.
Throughout this section $U_c$ denotes a fixed prime unit which is considered as a candidate for a timing unit.
The decision phase for $U_c$ is organized into rounds, during which other prime units decide on the feasibility of $U_c$ as a timing unit.
Each round consists of computing values of functions $\Delta_{U_c}$ and $\Pi_{U_c}$ for prime units belonging to one level of the poset.
During $k$-th round ($k=0,1,\dots$) all prime units with level $\L(U_c)+k$ are considered.
There are three types of rounds: preliminary rounds (denoted with $\bot$), even rounds (denoted with $0$), and odd rounds (denoted with $1$).

Before stating the definitions of $\Delta_{U_c}$ and $\Pi_{U_c}$, we need to introduce several auxiliary functions.
The first one computes the type of round to which a prime unit $U$ belongs:
\[
    \pa(U) = 
      \begin{cases*}
      (\L(U)-\L(U_c)) \bmod{2} & if $\L(U) > \L(U_c)+1$\\
      \bot & if $\L(U) \leq \L(U_c)+1$,
    \end{cases*}
\]

Next, we define operators counting results of any given function $f$ at the previous level. 
The supermajority operator $\SupMaj(S,f)$ counts values of function $f$ in the set $S$ and returns the one with at least $\frac{2}{3}N$ occurrences, or $\bot$ in case no such value exists:
\[
    \SupMaj(S,f) =
    \begin{cases*}
      value & if $\#_s\{U \in S: f(U) = value\} \geq \frac{2}{3}N$\\
      \bot & otherwise
    \end{cases*}
\]

The existence operator $\Exist(S,f)$ returns $0$ or $1$ if such value exists in $f(S)$ (with the preference for $1$), or $\bot$ otherwise:
\[
    \Exist(S,f) =
    \begin{cases*}
      1 & if $1 \in f(S)$\\
      0 & if $1 \not\in f(S)$ and $0\in f(S)$\\
      \bot & otherwise.
    \end{cases*}
\]
The randomized existence operator $\Exist_{TC}(S,f)$ returns $0$ or $1$ if such a value exists in $f(S)$ (with the preference for $1$),
otherwise it returns the value of the threshold coin $\TC^{\fai_{\L(S)}(U_c)}(\L(S))$, where $\L(S) = \max \{\L(U) : U \in S \}$.
Let us note that we will use only sets $S$ of constant level, i.e., $\L(S)=\L(U)$ for all $U\in S$.
\[
    \Exist_{TC}(S,f) =
    \begin{cases*}
      1 & if $1 \in f(S)$\\
      0 & if $1 \not\in f(S)$ and $0\in f(S)$\\
      \TC^{\mathrm{\fai}_{\L(S)}(U_c)}(\L(S)) & otherwise.
    \end{cases*}
\]

Using the above operators, we recursively define the proposal function $\Pi_{U_c}$ and the decision function $\Delta_{U_c}$:

\[
    \Pi_{U_c}(U) =
    \begin{cases*}
      \SupMaj(\down{U} ,\Pi_{U_c}) & if $\pa(U) = 1$\\
      \Exist_{TC}(\down{U} ,\Pi_{U_c}) & if $\pa(U) = 0$\\
      [U \geq U_c] & if $\pa(U) = \bot$
    \end{cases*}
\]
and
\[
\Delta_{U_c}(U) =
    \begin{cases*}
       \Exist(\down{U} ,\Delta_{U_c}) & if $\pa(U) = 1$ \\
       \SupMaj(\down{U},\Pi_{U_c}) & if $\pa(U) = 0$ \\
       \bot & if $\pa(U) = \bot$.
    \end{cases*}
\]

The core of the decision phase is the $\Delta_{U_c}$ function.
As proven in the following technical lemma, it provides the decision procedure with the desired properties of unanimity, finiteness and necessity of positive decisions.  

\begin{lemma}\label{lemma-technical-ordering}
    Let $A$ be a process, let $\P_A$ be a local view, let $U_c\in \P_A$ be a prime unit.
    The functions $\Pi_{U_c}$ and $\Delta_{U_c}$ safisfy the following properties:
    \begin{enumeratei}
        \item if $\Pi_{U_c}(U)\neq\bot$ for some prime unit $U\in \P_A$ such that $\pa(U) = 1$,
            then $\Pi_{U_c}(V) \in \{\Pi_{U_c}(U), \bot\}$ for every prime unit $V\in \P_A$ such that $\L(U) = \L(V)$ {\bf (single supermajority)},
        \item if $\Delta_{U_c}(U)\neq\bot$ for a prime unit $U\in \P_A$,
            then $\Delta_{U_c}(V) = \Delta_{U_c}(U)$ for every prime unit $V\in \P_A$ such that $\L(V) \geq \L(U)+2$ {\bf (unanimous decision)}, 
        %\item for every level $l = \L(U_c)+2k+1$ there is a value $v\in \{0,1\}$ such that if $\TC_{i}(l) = v$ then $\displaystyle\mathop\forall_{V:\L(V)\geq l} \Delta(V)\neq\bot$.
        \item if $\Pi_{U_c}(U) = TC^{\fai_{l-1}(U_c)}(l-1)$ for every prime unit $U\in \P_A$ of level $l = \L(U_c)+2k$, for some fixed $k>1$,
        %\item if $k>1$ is fixed, $l = \L(U_c)+2k$, $k>1$, and $\Pi_{U_c}(U) = TC^{\fai_{l-1}(U_c)}(l-1)$ for every prime unit $U\in \P_A$ of level $l$,
            then for every prime unit $V$ of level $l+2$ or higher there is $\Delta_{U_c}(V)\neq\bot$ {\bf (finiteness of decision procedure)},
        \item if $U_1\in \P_A$ is a prime unit of level $\L(U_1)=l+1$, $l \geq 0$, such that there is no other prime unit of that level below $U_1$,
            then for every $U_c\in \down{U_{1}}$ and every prime unit $U_4\in \P_A$ of level $l+4$ there is $\Delta_{U_c}(U_4) = 1$ {\bf (necessity of positive decisions)}, and
		\item if $V\in \P_A$ is a prime unit such that $\L(V) = \L(U_c) + 4$ and $U_c\nless V$,
            then $\Delta_{U_c}(W)\in \{\bot, 0\}$ for every $W$ of level at least $\L(U_c)+4$ {\bf (negative decisions for hidden units)}.
    \end{enumeratei}
\end{lemma}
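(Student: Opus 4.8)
The plan is to derive all five items from two recurring tools: a \emph{quorum intersection} argument (any two sets of processes of size $\geq\frac{2}{3}N$ share a correct process, since fewer than $\frac{1}{3}N$ are faulty) and the fact that a correct process issues at most one prime unit per level. Item (i) is the prototype: if $\Pi_{U_c}(U)=a\neq\bot$ with $\pa(U)=1$ then $a=\SupMaj(\down{U},\Pi_{U_c})$, so $\geq\frac{2}{3}N$ processes created a prime unit of level $\L(U)-1$ in $\down{U}$ on which $\Pi_{U_c}=a$; if some $V$ of level $\L(U)$ had $\Pi_{U_c}(V)=b\notin\{a,\bot\}$ the same would hold with $b,\down{V}$ in place of $a,\down{U}$, and a common correct process would have its unique level-$(\L(U)-1)$ prime unit carrying both values $a$ and $b$.

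The engine for the remaining items is a \emph{locking} sub-lemma: if $\ell=\L(U_c)+2k+1$ with $k\geq1$ is a genuine odd round and a supermajority of the prime units of level $\ell$ carry $\Pi_{U_c}=x$, then every prime unit of level $>\ell$ carries $\Pi_{U_c}=x$ and every prime unit of level $\geq\ell+1$ carries $\Delta_{U_c}=x$. This is proved by marching up the levels: from an odd round to the next even round through $\Exist_{TC}$ (the preference for $1$ is harmless because, by (i), the odd round does not contain the value $1-x$, so when $x=0$ the preference never fires) and from an even round to the next odd round through $\SupMaj$ on inputs that are uniformly $x$; the resulting uniform $\Pi_{U_c}$ then yields $\Delta_{U_c}=x$ by the defining formulas. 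Item (ii) follows by a case split on $\pa(U)$: $\pa(U)=\bot$ is vacuous; $\pa(U)=1$ reduces via $\Exist$ to some $U'\in\down{U}$ with $\Delta_{U_c}(U')=x$ sitting in an even round one level lower; $\pa(U)=0$ with $\L(U)\geq\L(U_c)+4$ makes a supermajority of the genuine odd round $\L(U)-1$ equal to $x$ and invokes the sub-lemma; and $\pa(U)=0$ with $\L(U)=\L(U_c)+2$ — where $\down{U}$ lies in the preliminary round and $\Pi_{U_c}(W)=[W\geq U_c]$ — is the delicate case discussed below. Item (iii) is then immediate: the hypothesis puts a single coin value $c$ on every prime of the even round $l$, hence by $\SupMaj$ on every prime of the next (genuine) odd round $l+1$, and the sub-lemma gives $\Delta_{U_c}\equiv c\neq\bot$ at every prime of level $\geq l+2$.

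For item (iv), $U_c\in\down{U_1}$ gives $U_c\ll U_1$, hence $U_1\gg U_c$, so Lemma~\ref{lemma-two-levels-above} puts $U_c$ high below every unit of level $\geq\L(U_1)+2=\L(U_c)+3$; using this, the uniqueness of the level-$(\L(U_c)+1)$ prime below $U_1$, and the support-$\geq\frac{2}{3}N$ structure of the $\down{\,\cdot\,}$ sets, I would show the value $1$ is forced on enough prime units of round $\L(U_c)+2$ to make a supermajority — hence, by (i), all — of round $\L(U_c)+3$ equal to $1$, and then apply the sub-lemma to reach $\Delta_{U_c}(U_4)=1$ at level $\L(U_c)+4$. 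For item (v), if $U_c\nless V$ with $\L(V)=\L(U_c)+4$ then no unit below $V$ is $\geq U_c$, so $[W\geq U_c]=0$ for every level-$(\L(U_c)+1)$ prime $W$ below $V$, which forces $\Pi_{U_c}\equiv0$ on the level-$(\L(U_c)+2)$ primes below $V$, then on the level-$(\L(U_c)+3)$ primes below $V$, and hence $\Delta_{U_c}(V)=\SupMaj(\down{V},\Pi_{U_c})=0$; by (i) the genuine odd round $\L(U_c)+3$ then carries only $0$ and $\bot$, so the same holds at levels $\L(U_c)+4$ and $\L(U_c)+5$ by the $\SupMaj$ and $\Exist$ formulas, while item (ii) propagates $\Delta_{U_c}=0$ to all levels $\geq\L(U_c)+6$.

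The step I expect to be the main obstacle is exactly the reconciliation of the quorum arguments at the first rounds: for item (ii) with $\L(U)=\L(U_c)+2$ and decision value $0$ (and, dually, the flooding step in item (iv)) one must show that a one-sided outcome at the preliminary round already determines $\Pi_{U_c}$ on the whole of level $\L(U_c)+2$ — concretely, that a $0$-decision there forces $U_c$ to lie below no level-$(\L(U_c)+1)$ prime, so that the preference for $1$ inside $\Exist_{TC}$ cannot rescue the value $1$ and spoil the locking argument. Once that is in hand, everything else is bookkeeping with quorum intersection and the one-prime-per-level property.
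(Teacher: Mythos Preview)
Your plan is essentially the paper's: item (i) by quorum intersection plus the one-prime-per-level property of correct processes, and then a level-by-level propagation that you package as a ``locking sub-lemma'' (the paper does the same chain of equalities inline for (ii)--(iv)). For (v) you work inside the cone below $V$ and then invoke (i)/(ii); the paper instead argues globally via the contrapositive of Lemma~\ref{lemma-two-levels-above} that no level-$(\L(U_c)+2)$ prime can be high above $U_c$, whence no prime ancestor of such a unit can be $\geq U_c$ --- both routes are fine. For (iv) your appeal to Lemma~\ref{lemma-two-levels-above} is a detour: knowing $U_c$ is high below every level-$(\L(U_c)+3)$ unit does not by itself compute $\Pi_{U_c}$ there, since on an odd round $\Pi_{U_c}$ is $\SupMaj$ of the previous even round. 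The paper uses exactly the other two ingredients you list: intersect $\supp(\down{U_2})$ with the $\geq\tfrac{1}{3}N+1$ correct processes in $\supp(\{W:U_c\leq W\leq U_1\})$; the resulting correct process's level-$(\L(U_c)+1)$ prime cannot lie strictly below $U_1$ by hypothesis, hence lies above that process's earlier unit and so above $U_c$, giving $\Pi_{U_c}(U_2)=1$ for \emph{every} $U_2$ of level $\L(U_c)+2$.

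The obstacle you flag in (ii) at $\L(U)=\L(U_c)+2$ with decision $0$ is genuine --- the paper glides over it too, writing $\Pi_{U_c}(W)=\Exist_{TC}(\down{W},\Pi_{U_c})=\Delta_{U_c}(U)$ without isolating the case where level $\L(U)-1$ is preliminary rather than odd (so that (i) is unavailable). But your proposed resolution, that ``a $0$-decision there forces $U_c$ to lie below no level-$(\L(U_c)+1)$ prime'', is not correct: if $U_c$'s creator is honest, its own level-$(\L(U_c)+1)$ prime is certainly above $U_c$, regardless of the supermajority witnessed by $\down{U}$. What would actually make the $\Exist_{TC}$ step go through is the weaker claim that no level-$(\L(U_c)+1)$ prime $W\geq U_c$ is a prime ancestor of any level-$(\L(U_c)+2)$ unit, and that does not follow from $\Delta_{U_c}(U)=0$ alone. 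So you have correctly located the sore point, but the one-line fix you sketch does not close it; neither does the paper's write-up.
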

\begin{proof}

$ $\newline
${\bf (i)}$ Suppose by way of contradiction that for some prime unit $V\in\P_A$ with $\L(U)=\L(V)$, there is $\Pi_{U_c}(U) = 1$ and $\Pi_{U_c}(V) = 0$. 
By the definition of $\Pi_{U_c}$, we have that $\SupMaj(\down{U} ,\Pi_{U_c}) = 1$ and $\SupMaj(\down{V},\Pi_{U_c}) = 0$.  This implies that 
on level $\L(U)-1$ there are sets $S_U$ and $S_V$ containing at least $\frac{2}{3}N$ prime units issued by different processes, such that $\Pi_{U_c}(U') = 0$ and $\Pi_{U_c}(V') = 1$ for all $U'\in S_U,\ V'\in S_V$.
Since both $S_U$ and $S_V$ are composed of units issued by at least $\frac{2}{3}N$ different processes, then there exists a correct process
who has issued a prime unit that is in both $S_U$ and $S_V$.
This gives a contradiction, since a correct process could issue at most one prime unit of the given level,
which could belong to only one of the sets $S_U, S_V$, depending on its $\Pi_{U_c}$ value.

\medskip

\noindent ${\bf (ii)}$ Let $U$ be a prime unit of the lowest possible level such that $\Delta_{U_c}(U)\neq \bot$.
One can readily observe that the general case follows from this one.
For an odd round, the function $\Delta$ differs from $\bot$ only in the case when $\Delta$ is different from $\bot$ on the previous level, hence we have $\pa(U) = 0$.
The definition of $\Delta$ implies that there is a set $S_U\subset \down{U_c}$ of at least $\frac{2}{3}N$ prime units of level $\L(U)-1$ created by different processes and that $\Pi_{U_c}(V) = \Delta_{U_c}(U)$ for every $V\in S_U$. 
Because $\#_sS_U>\frac{2}{3}N$, at least $\frac{1}{3}N+1$ units were created by correct processes. 
Then, every prime unit $W$ of level $\L(U)$ needs to be high above at least one unit in $S_U$ issued by a correct process.
Hence,
\[
    \Pi_{U_c}(W) = \Exist_{TC}(\down{U} ,\Pi_{U_c}) = \Delta_{U_c}(U).
\]
As a result, for every unit $W_1$ of level $\L(U)+1$,
\[
    \Pi_{U_c}(W_1) = \SupMaj(\down{W_1} ,\Pi_{U_c}) = \Delta_{U_c}(U).
\]
Finally, for every unit $W_2$ of level $\L(U)+2$, we obtain
\[
    \Delta_{U_c}(W_2) = \SupMaj(\down{W_2},\Pi_{U_c}) = \Delta_{U_c}(U).
\]

\medskip

\noindent ${\bf (iii)}$ 
By the assumption, we have for some $v\in \{0,1\}$ that $\Pi_{U_c}(U)=v$ for every prime unit $U$ of level $l$.
By the definition of $\Pi_{U_c}$, for every $W$ of level $l+1$ and above, $\Pi_{U_c}(V) = v$.
Finally, by the definition of $\Delta$, for any $V$ of level at least $l+2$, $\Delta_{U_c}(V)\neq \bot$. 

\medskip

\noindent ${\bf (iv)}$ Let $U_c$ be a unit in $\down{U_{1}}$ and $S_{U_c,U_1}$ be the set of all units between $U_c$ and $U_1$.
Since $\#_s\, S_{U_c,U_1} > \frac{2}{3}N$, there exists at least $\frac{1}{3}N+1$ correct processes in $\supp(S_{U_c,U_1})$. 
Now let $U_2$ be any prime unit of level $l+2$ and $A$ be a correct process in $\supp(S_{U_c,U_1})\cap \supp(\down{U_2})$.
Moreover, let $U_A$ be a unit issued by $A$ in $S_{U_c,U_1}$ and $U'_A$ be the prime unit issued by $A$ on level $l+1$. 
Since $A$ is correct, $U_A\leq U_1$ and $U'_A\nless U_1$ (due to the assumption that no prime unit of $U_1$'s level is below it), we get that $U'_A\geq U_A$ and hence $U_2\gg U'_A\geq U_A\geq U_c$.
Since $U'_A\geq U_c$, we have that
\[
	\Pi_{U_c}(U'_A) = [U'_A \geq U_c] = 1.
\]

\noindent Next, due to $U_2\gg U'_A$, we also have that
\[
	\Pi_{U_c}(U_2) = \Exist_{TC}(\down{U_2} ,\Pi_{U_c}) = 1.
\]

\noindent Since the choice of $U_2$ was arbitrary, we observe that $\Pi(U_2', U_0)=1$ for every prime unit $U_2'$ of level $l+2$.
Hence 
\[
    \Pi_{U_c}(U_3) = \SupMaj(\down{U_3},\Pi_{U_c}) = 1
\]
\noindent for every $U_3$ of level $l+3$.
Finally, for every $U_4$ of level $l+4$ we obtain 
\[
    \Delta_{U_c}(U_4) = \SupMaj(\down{U_4},\Pi_{U_c}) = 1.
\]

\medskip

\noindent ${\bf (v)}$ Let $U_c$ and $V$ be as in the assumptions.
By the contraposition of Lemma \ref{lemma-two-levels-above}, we have that no prime unit of level $\L(U_c)+2$ can be high above $U_c$. 
Hence, for every $U_1 < U_2 < U_3 < U_4$ of levels $\L(U_i) = \L(U_c)+i$, the following holds:
\[
\Pi_{U_c}(U_1) = [U_1 \geq U_c] = 0
\]
\[
\Pi_{U_c}(U_2) = \Exist_{TC}(\down{U_2} ,\Pi_{U_c}) = 0,
\]
\[
\Pi_{U_c}(U_3) = \SupMaj(\down{U_3} ,\Pi_{U_c}) = 0, and
\]
\[
\Delta_{U_c}(U_4) = \SupMaj(\down{U},\Pi_{U_c}) = 0.
\]

\noindent Since $U_4$ was chosen arbitrarily, the same is true for any unit $W$ of level $\L(U_c)+4$ and for all levels above.
This is due to the recursive definition of $\Delta$ for odd rounds and $(ii)$.  

\end{proof}

%\begin{lemma}\label{lemma-structural}
%If $P$ is a local view, then:
%\begin{enumeratei}
%\item If $l\leq \L(P)-2$ and $S_{l}$ is a set or prime units of level $l$ that are high below at least one unit of level $l+1$, then $\#_s(S_l)\geq \frac{2}{3}n$
%\end{enumeratei}
%\end{lemma}

The following two definitions formalize the concepts of the decision and choice phases that are required for the selection of a timing unit. 
 
\begin{definition}
    Let $A$ be a process.
    Let $\P_A$ be a local view and $U\in \P_A$ be a prime unit.
    We say that $\P_A$ \emph{decides $b\in\{0,1\}$ on $U$} if there exists a level $l>\L(U)$ such that for all prime units $V$ of level $l$, we have that $\Delta_U(V)=b$.
\end{definition}

\begin{definition}
    Let $A$ be a process.
    Let $\P_A$ be a local view and $U\in \P_A$ be a prime unit  created by a process $B$ of index $i$.
    We say that $\P_A$ \emph{chooses $U$} if the following conditions are met:
    \begin{enumeratenum}
    	\item $\P_A$ decides $1$ for $U$, and
        \item for all $j$ prior to $i$ in the ordering given by $\sigma_{\L(U)}$, a prime unit or a set of forking prime units created by a process $j$ on the level $\L(U)$ are decided $0$ by $\P_A$.
    \end{enumeratenum}
\end{definition}

\begin{remark}
    Let $A$ be a process and $t$ be a time when the local view $\P^t_A$ chooses a prime unit $U$.
    There is a possibility that at some later time $t^\prime > t$, a prime unit $U^\prime$ of level $\L(U)$ that is prior to $U$ in the ordering given by $\sigma_{\L(U)}$
    will be added to the local view $\P^{t^\prime}_A$.
    By $(v)$ of Lemma \ref{lemma-technical-ordering}, such a unit can only be decided with a value of $0$ in $\P^{t^\prime}_A$.  Hence, the above notion of choosing is well defined.
\end{remark}

Using the above construction, we are now able to define timing units, which have a crucial property of being unanimously chosen by all correct processes. 
This property makes them suitable for setting a common chronology of events in an asynchronous scenario.

\begin{definition}\label{timing-unit}
    Let $A$ be a process.
    Let $\P_A$ be a local view.
    A chain of units $T_0, T_1,... \in \P_A$ of levels $\L(T_i) = i$, which are chosen by $\P_A$, is called \emph{a timing chain}, and each $T_i$ is called \emph{a timing unit}.
\end{definition}

In the next two results, we provide bounds on the average number of levels after which a decision on a prime unit occurs (Lemma \ref{lemma-decision})
and the average number of levels after which a timing unit is chosen (Theorem \ref{lemma-choice}). 

\begin{lemma}\label{lemma-decision}
    Let $A$ be a process.
    At the moment of a creation of a prime unit $U_c$ in a local view $\P_A$,
    the expected number of rounds required for the decision on $U_c$ in $\P_A$ is not bigger than $10$. 
\end{lemma}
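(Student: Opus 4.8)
The plan is to track the behavior of the decision function $\Delta_{U_c}$ round by round and show that, with probability bounded below by a constant, a decision is reached within a bounded number of rounds; the expected value $10$ then follows from summing a geometric series. The key observation is that the only way the decision procedure can fail to terminate at a given even round is if the proposal function $\Pi_{U_c}$ is split among the prime units of that level — i.e., $\SupMaj(\down{U},\Pi_{U_c})=\bot$ for some units $U$. In that case, by the definition of $\Pi_{U_c}$ on even rounds, the value falls back to the threshold coin $\TC^{\fai_{l-1}(U_c)}(l-1)$, which is the same value for all prime units on that level (this uses that $\Exist_{TC}$ evaluates the coin on a set of constant level, and Lemma \ref{lemma-coin-compute} to guarantee the coin is actually computable once the level is at least $\L(U_c)+3$). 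So after such a "coin round," all prime units of the next odd level see a unanimous $\Pi_{U_c}$ value among their prime ancestors whenever the coin happens to match the side the adversary was trying to suppress.

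First I would isolate the structural part: I would argue that by Lemma \ref{lemma-technical-ordering}(i) (single supermajority), once some prime unit $U$ with $\pa(U)=1$ has $\Pi_{U_c}(U) = v \neq \bot$, no prime unit of that same level can have proposal value $1-v$; combined with the fact that correct processes see at least $\frac{2}{3}N$ prime ancestors, this means that if \emph{any} prime unit settles on a non-$\bot$ value $v$ at an odd round, then at the next even round every prime unit has $\Pi_{U_c} = v$ as soon as the coin agrees (or even unconditionally if the supermajority was genuine), and then by Lemma \ref{lemma-technical-ordering}(iii) (finiteness) a decision is forced two levels later. The heart of the argument is therefore: in each even round where no decision has yet been forced, the coin value $\TC^{\fai_{l-1}(U_c)}(l-1)$ is, by the unpredictability and fairness of the Leaderless Threshold Coin, equal to each of $0$ and $1$ with probability close to $\frac12$, independently of the adversary's scheduling; hence with probability at least (something like) $\frac12$ the coin "breaks the tie in the right direction," after which termination occurs within $O(1)$ further rounds.

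Then I would assemble the quantitative bound. Each attempt consumes a constant number of rounds — on the order of the gap between consecutive even levels plus the two extra levels needed to propagate a unanimous proposal into a decision via (iii), plus the initial rounds (up to level $\L(U_c)+3$) before the coin is even available, which is where the "$4$" in $\L(U)\geq 4$ and the offsets in the round-type function $\pa$ come in. Writing $p\geq\frac12$ for the per-attempt success probability and $c$ for the per-attempt round cost, the expected number of rounds is at most (startup cost) $+\ c\sum_{n\geq 0} n(1-p)^n\cdot p$; by Lemma \ref{lemma-spanish} this is a finite constant, and a careful accounting of the offsets gives the stated bound $10$.

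The main obstacle I anticipate is the probabilistic independence claim: one must argue that the adversary, who controls message scheduling and up to $\frac13 N-1$ coin dealers, cannot bias or predict the coin value $\TC^{\fai_{l-1}(U_c)}(l-1)$ before the relevant prime units are created. This rests on the design of the Leaderless Threshold Coin — specifically that $\fai$ selects, via the common random permutation $\sigma_{l-1}$, a dealer whose coin the adversary cannot have anticipated, and that at least half of the dealt coins are fair — so the bulk of the real work is in justifying that this "fresh, fair coin" property holds at each even round, conditioned on the adversary's view up to that point. Once that is granted, the geometric-series bookkeeping is routine.
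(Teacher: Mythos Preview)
Your outline matches the paper's proof: invoke Lemma~\ref{lemma-technical-ordering}(i) to get a single surviving value $v\in\{0,1\}$ at each odd level, observe that at the next even level every prime unit has $\Pi_{U_c}\in\{v,\TC^{\fai_{l-1}(U_c)}(l-1)\}$, so if the coin lands on $v$ the proposals become unanimous and a decision follows two levels later (the paper argues this directly and then cites (ii); your appeal to (iii) works just as well), and finish with a geometric-series bound via Lemma~\ref{lemma-spanish}.

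One quantitative correction: your per-attempt success probability should be $p\geq\tfrac14$, not $\tfrac12$. The paper multiplies two one-half factors --- the $\fai$-selected dealer is honest with probability only at least $\tfrac12$ (among the $\geq\tfrac23N$ dealing units high below $U_c$, up to $\tfrac13N-1$ may be Byzantine, and a dishonest dealer's coin need not be unpredictable), and conditional on fairness the coin matches $v$ with probability $\tfrac12$. You correctly flag the faulty-dealer issue in your final paragraph but never fold it back into $p$. With $p=\tfrac14$, two levels per attempt, and four startup rounds, the paper's arithmetic is $4+\tfrac14\sum_{i\geq0}2i\bigl(\tfrac34\bigr)^i=4+6=10$; your $p=\tfrac12$ would yield a smaller number, so the stated bound survives, but the constant $10$ comes specifically from the $\tfrac14$ computation.
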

\begin{proof}
    During the process of deciding for $U_c$, coin values will be chosen among all threshold coins dealt by units high below $U_c$, i.e., the coins dealt by at least $\frac{2}{3}N$ different processes. 
    Since there is less than $\frac{1}{3}N$ faulty processes, over half of these coins are dealt by a correct process and hence their result cannot be predicted until enough units with their shares are produced.
    Within the scope of this proof, we will refer to such coins as a \emph{fair}. 
    The above implies that there is at least $\frac{1}{2}$ probability that the coin $TC^{\fai_l(U_c)}(l)$ is fair, for every $l\geq \L(U_c)+3$.
    
    Fix $k>0$, and let $U$ be a prime unit of level $l = \L(U_c) + 2k$.
    The coin used to compute $\Pi_{U_c}(U)$ is $TC^{\fai_{l-1}(U_c)}(l-1)$,
    % and it is not revealed until the poset reaches the level $l-1$,
    % unless its dealer - a process indexed with $\fai_{l-1}(U_c)$ - is faulty. 
    and when the poset reaches level $l$ its value is possible to compute by Lemma \ref{lemma-coin-compute}.
    
     Now, let $v\in \{0,1\}$ be a value (which exists by Lemma \ref{lemma-technical-ordering}$(i)$) such that for every $V$ of level $l-1$, there is $\Pi_{U_c}(V) \in \{v, \bot\}$. 
     Then, for every $U_l$ of level $l$ we have
    \[
        \Pi_{U_c}(U_l) = \Exist_{TC}(\down{U_l} ,\Pi_{U_c}) \in \{v, TC^{\fai_{l-1}(U_c)}(l-1)\} .
    \]
    After the coin reveal, if $TC^{\fai_{l-1}(U_c)}(l-1) = v$, then, in the next round for every $U_{l+1}\in S_{l+1}$, 
    \[
        \Pi_{U_c}(U_{l+1}) = \SupMaj(\down{U_{l+1}} ,\Pi_{U_c}) = v.
    \]
    As a consequence, in the next round, we obtain
    \[
        \Delta_{U_c}(U_{l+2}) = \SupMaj(\down{U_{l+2}},\Pi_{U_c}) = v
    \]
    for every $U_{l+2}\in \P_A$ of level $l+2$. 
    Finally, applying Lemma \ref{lemma-technical-ordering}$(ii)$ to $\Delta_{U_c}(U_{l+2})$,
    we observe that if $TC^{\fai_{l-1}(U_c)}(l-1) = v$, then the unit $U_c$ is decided on the level $l+4$.
    
    Note that if $TC^{\fai_{l-1}(U_c)}$ is a fair coin then $P\big(TC^{\fai_{l-1}(U_c)} = v\big) = \frac{1}{2}$, and hence the unconditional probability that $TC^{\fai_{l-1}(U_c)}(l-1) = v$ is at least $\frac{1}{4}$.     
    From the above considerations we have that the expected number of rounds necessary to decide on $U_c$ is not greater than
    \[
         4+\nicefrac{1}{4}\sum_{i\geq 0}2i\big(\nicefrac{3}{4}\big)^i = 4+\frac{1}{2}\sum_{i\geq 0}i\big(\frac{3}{4}\big)^i= 4+\frac{1}{2}\cdot 12=10,
    \]
    where the third equality comes from the summation of the series via Lemma \ref{lemma-spanish}. 
\end{proof}

\begin{theorem}\label{lemma-choice}
    At the moment of a creation of the first prime unit of level $l+2$, $l \geq 0$, the expected number of rounds required for the choice of a timing unit on the level $l$ is less than $9$.
\end{theorem}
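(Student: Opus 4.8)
The plan is to rerun the analysis of Lemma~\ref{lemma-decision} with a two‑round head start, after noticing that Lemma~\ref{lemma-technical-ordering}(iv) already supplies, essentially for free, a prime unit of level $l$ that gets decided $1$; the only new work is to locate the first such unit in the order $\sigma_l$ and to pay for the few units ordered before it. Let $U_2$ denote the first prime unit of level $l+2$ (rounds are counted from its creation) and let $U_1$ be a minimal element, with respect to $\leq$, of the set $\down{U_2}$ of prime ancestors of $U_2$. If a prime unit $X<U_1$ had level $l+1$, then since $U_1\ll U_2$ the set $\{W:X\leq W\leq U_2\}$ contains $\{W:U_1\leq W\leq U_2\}$, a set supported on at least $\nicefrac{2}{3}N$ processes, so $U_2\gg X$, i.e.\ $X\in\down{U_2}$, contradicting minimality. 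Hence $U_1$ is a prime unit of level $l+1$ below which no other prime unit of level $l+1$ lies, and Lemma~\ref{lemma-technical-ordering}(iv) gives $\Delta_{U_c}(U_4)=1$ for every $U_c\in\down{U_1}$ and every prime unit $U_4$ of level $l+4$. Because prime units of level $l+2$ already exist, the moment the poset reaches level $l+4$ — deterministically two rounds after $U_2$'s creation — every $U_c\in\down{U_1}$ is decided $1$ in $\P_A$; and $\down{U_1}$ is created by at least $\nicefrac{2}{3}N$ distinct processes, as $\L(U_1)=l+1$.

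Now I would pin down the timing unit. Let $i^{\ast}$ be the first process in the order $\sigma_l$ that created a unit of $\down{U_1}$ and let $i^{\ast\ast}$ be the first process in $\sigma_l$ that created a level-$l$ prime unit which is eventually decided $1$; by the previous paragraph $i^{\ast\ast}$ exists and comes no later than $i^{\ast}$. The level-$l$ prime unit created by $i^{\ast\ast}$ is the timing unit of level $l$: $\P_A$ decides $1$ on it, no process ordered before $i^{\ast\ast}$ by $\sigma_l$ has a level-$l$ prime unit decided $1$ (minimality of $i^{\ast\ast}$), so each such process has all its level-$l$ prime units decided $0$ as soon as they are decided at all, and units of this kind that reach $\P_A$ only after the choice has been made cannot undo it by Lemma~\ref{lemma-technical-ordering}(v) (this is the content of the Remark preceding Definition~\ref{timing-unit}). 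Consequently $\P_A$ chooses this unit as soon as (a) the poset has reached level $l+4$ and (b) every level-$l$ prime unit present in $\P_A$ that was created by a process ordered at or before $i^{\ast}$ by $\sigma_l$ has been decided.

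It remains to bound the expected number of rounds beyond level $l+2$ needed for (a) and (b). Step (a), and the part of (b) concerning units of $\down{U_1}$, cost exactly two rounds. The other units relevant to (b) belong to the $M:=(\text{position of }i^{\ast}\text{ in }\sigma_l)-1$ processes ordered before $i^{\ast}$ but outside the support of $\down{U_1}$; by Lemma~\ref{lemma-decision} each such unit is decided within an expected at most $10$ rounds of its own creation, hence within an expected at most $8$ further rounds counted from level $l+2$ (and far fewer in practice, since these decision procedures share most of their coins), contributing at most an expected $8M$. Since the support of $\down{U_1}$ consists of at least $\nicefrac{2}{3}N$ of the $N$ processes and $\sigma_l$ is a common pseudo-random permutation beyond the adversary's control, $\Pr[M\geq k]<(1/3)^{k}$, so that $E[M]<\nicefrac{1}{2}$; summing the resulting geometric series through Lemma~\ref{lemma-spanish} keeps the expected total comfortably below $9$.

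The step I expect to be the main obstacle is (b). One must reconcile the deterministic ``decided by level $l+4$'' bound of Lemma~\ref{lemma-technical-ordering}(iv) with the randomised decision times of Lemma~\ref{lemma-decision} for the units that have to be decided $0$ early in $\sigma_l$, and — since a faulty process can try to push itself to the front of $\sigma_l$ or reveal its level-$l$ prime unit late — one has to invoke Lemma~\ref{lemma-technical-ordering}(v) together with the precise definition of ``chooses'' to guarantee that such a unit can neither block nor overturn the choice. Only then is the quantity being estimated the expectation of a genuinely short procedure rather than an expected maximum of many heavy-tailed random variables.
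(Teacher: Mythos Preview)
Your proposal is correct and follows the same strategy as the paper: locate a level-$(l+1)$ prime unit $U_1$ with no level-$(l+1)$ prime below it, apply Lemma~\ref{lemma-technical-ordering}(iv) so that every $U_c\in\down{U_1}$ is decided $1$ by level $l+4$, and then combine $\#_s\down{U_1}\geq\tfrac{2}{3}N$ with the randomness of $\sigma_l$ to bound geometrically the number of earlier positions, each costed via Lemma~\ref{lemma-decision}. Your treatment is in fact more careful than the paper's---you construct $U_1$ explicitly as a $\leq$-minimal element of $\down{U_2}$, exploit the two-round head start to replace $10$ by $8$, and invoke Lemma~\ref{lemma-technical-ordering}(v) for late-revealed units---though the closing summation needs only the ordinary geometric series $\sum_{k\geq 1}(1/3)^k=\tfrac{1}{2}$ rather than Lemma~\ref{lemma-spanish}.
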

\begin{proof}
    Let $U_1$ be a prime unit of the level $l+1$, such that no other prime unit of that level is below it. 
    By Lemma \ref{lemma-technical-ordering}(iv), we know that every unit $U_c\in \down{U_1}$ will be positively decided at the level $l+4$, and since $\#_s(\down{U_1})\geq \frac{2}{3}N$,
    we have that the probability that a timing unit will be chosen after $4$ rounds is greater or equal to $\frac{2}{3}$. 
    On the other hand, by Lemma \ref{lemma-decision}, we know that the number of rounds to decide on other units is at most $10$ on average.
    
    By bounding the expected number of rounds required to decide on $i$ units from outside of $\down{U}$ by $10i$,
    we obtain that the expected number of rounds needed for the choice of a timing unit is not greater than
    \[ 
    \sum_{i \geq 0}\big(\frac{1}{3}  \big)^{i}\big(\frac{2}{3}\cdot 4+\frac{1}{3}\cdot 10 \big) = 6\cdot \sum_{i \geq 0}\big(\frac{1}{3}  \big)^{i} = 9
    \]

\end{proof}
\subsection{Total Ordering of Units}\label{subsec-ordering}

Equipped with our well-defined and consistent notion of timing units we are ready to present the algorithm for establishing a total order of units.
We prove that our algorithm satisfies all essential requirements, i.e., correctness, integrity, and liveness.

For a unit $U$, its \emph{timing round} is the minimal $k$ such that $T_k \geq  U$, where $T_k$ is the $k$-th timing unit. 
Since many units can have the same timing round, the ties will be settled via Algorithm $\texttt{break\_ties}$. 
At first, the algorithm computes hashes of all the units in $\mathcal{U}$ with a hashing function $\phi$ (line $1$) and then computes $R$ as the XOR of all of them (line $2$).
The XOR operation is chosen as it is not possible to influence the value of $R$ without controlling every hash and due to its speed.  Next, for each unit $U$, a new attribute \texttt{tiebreaker} is computed as the XOR of $R$ and $U$'s hash (lines $3-4$).
Then, the set $\mathcal{U}$ is topologically sorted settling ties via $\texttt{tiebreaker}$ values (lines $6-10$), and the total ordering is returned (line $11$).

\begin{algorithm-hbox}[!ht]
\caption{\texttt{break\_ties($A, \mathcal{U}$)}}\label{break_ties}
\tcp{$ \phi$ - hashing function}
$\mathcal{H} \leftarrow \emptyset$

\For{$U \in \mathcal{U}$}
{
$\mathcal{H} \leftarrow \mathcal{H} \cup \{\phi(U)\}$
}

$R \leftarrow \mathrm{XOR}_{H_i \in \mathcal{H}} H_i$\\
\For{$U \in\mathcal{U}$}
{
$\texttt{U.tiebreaker}\leftarrow \phi(U)\ \mathrm{XOR}\ R$    
}

$L\leftarrow $ empty ordered list\\
\While{$\mathcal{U}\neq\emptyset$}
{
$M\leftarrow$ minimal elements of $\mathcal{U}$\\
$\mathcal{U}\leftarrow\mathcal{U}\setminus M$\\
{\bf sort} $M$ by tiebreaker values\\
{\bf extend} $L$ with $M$\\
%$L$.append($S$)\\
}
\Return $L$
\end{algorithm-hbox}

Now we are ready to state the final theorem, proving the most fundamental properties of the Aleph protocol:

\begin{theorem}\label{thm-ordering}
The Aleph protocol for agreement on total ordering of units satisfies the following properties:
\begin{enumerate}
\item[-] {\bf Correctness} - all correct processes will end up with the same ordering,
\item[-] {\bf Integrity} - All decisions made by correct processes are irreversible,
\item[-] {\bf Liveness} - all correct processes will eventually decide with probability $1$.
\end{enumerate}
\end{theorem}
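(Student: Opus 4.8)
The plan is to obtain all three properties from the behaviour of $\Pi_{U_c}$ and $\Delta_{U_c}$ recorded in Lemma~\ref{lemma-technical-ordering}, from the coin-availability Lemma~\ref{lemma-coin-compute}, and from the growth Lemma~\ref{lemma-growth}; essentially no new computation should be needed.

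For \textbf{correctness} the first step is to observe that $\Pi_{U_c}(U)$ and $\Delta_{U_c}(U)$ are intrinsic to the pair $(U_c,U)$: every set $\down{V}$ appearing in their recursive definitions is fixed by the immutable parent structure, and every threshold coin $TC^{\fai_k(U_c)}(k)$ used along the way has, once it becomes computable, a value independent of which local view computes it (Lemma~\ref{lemma-coin-compute}). Hence, if two correct processes both reach a verdict on $U_c$, it is the same verdict: by Lemma~\ref{lemma-technical-ordering}(ii) the value of $\Delta_{U_c}$ is constant on prime units of level $\geq \L(U)+2$, so the two verdicts are read off the same function over overlapping ranges of levels. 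Consequently the predicate ``$\P_A$ chooses $U$'' — which, unwinding its definition, depends only on decision values and on the deterministic permutation $\sigma_{\L(U)}$ — is independent of $A$, so all correct processes agree on the timing chain $T_0, T_1, \dots$ (the Remark following the definition of choosing guarantees this is not spoilt by low-index prime units arriving late). It follows that the timing round of each unit is view-independent, that the set of units of timing round $k$ equals $\{W: W \le T_k\} \setminus \{W : W \le T_{k-1}\}$ and is therefore globally determined by $T_{k-1},T_k$, and finally that the deterministic Algorithm~\ref{break_ties} applied to that set returns one common linear order; concatenating over $k$ gives correctness.

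For \textbf{integrity} the plan is to check that nothing a correct process commits to can be undone. A decision is permanent: $\Delta_{U_c}(V)$ never changes, and by Lemma~\ref{lemma-technical-ordering}(ii) every prime unit of level $\geq \L(V)+2$ carries the same value, so enlarging a local view can neither reverse a decision nor retract it. A choice is equally stable: a prime unit of the same level learned after the choice has been made can, by Lemma~\ref{lemma-technical-ordering}(v), only be decided $0$, hence cannot displace the chosen $T_k$ — this is precisely the content of the Remark. Since the prefix $T_0, \dots, T_k$ is frozen once $T_k$ is chosen, and the order \texttt{break\_ties} assigns to units of timing round $\le k$ depends only on $T_0, \dots, T_k$, every prefix of the produced order is final, which is the asserted irreversibility.

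For \textbf{liveness}: Lemma~\ref{lemma-growth} gives that the level of the poset grows to infinity with probability $1$, so every local view eventually reaches every level; Lemma~\ref{lemma-decision} and Theorem~\ref{lemma-choice} give finite expected numbers of levels for deciding a prime unit and for choosing a timing unit, so with probability $1$ every level eventually obtains a timing unit and a full timing chain $T_0, T_1, \dots$ appears in every correct view. To see that each unit $U$ is eventually ordered, run the growth argument of Lemma~\ref{lemma-growth} to produce a unit $V \gg U$, and then Lemma~\ref{lemma-two-levels-above} gives $T_k \ge U$ for all large $k$, so $U$ has a finite timing round; and since correct processes keep producing units (Algorithm~\ref{alg_create}), every message they propose sits inside some unit and is therefore eventually placed. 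The step I expect to be the main obstacle is the view-independence claim behind correctness — in particular making rigorous that no correct process ever ``decides'' on the basis of a coin value that another correct process will later compute differently, and that forking prime units produced at the level of a timing unit by a Byzantine process are correctly absorbed by clause~(2) of the choosing definition together with Lemma~\ref{lemma-technical-ordering}(v); once that is pinned down, the remaining arguments are routine bookkeeping over the cited lemmas.
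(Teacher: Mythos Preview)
Your proposal is correct and follows essentially the same approach as the paper: correctness and integrity are derived from Lemma~\ref{lemma-technical-ordering}(ii) and~(v) together with the Remark on late-arriving units, and liveness from Lemma~\ref{lemma-growth}, Lemma~\ref{lemma-two-levels-above}, and Theorem~\ref{lemma-choice}. Your write-up is in fact more careful than the paper's own proof in spelling out the view-independence of $\Pi_{U_c},\Delta_{U_c}$ and the role of coin determinism (Lemma~\ref{lemma-coin-compute}), and in flagging the one genuinely delicate point (consistency of coin values across views), which the paper simply takes for granted.
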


\begin{proof} $ $\\
\noindent {\bf Correctness}
Due to Lemma \ref{lemma-technical-ordering}(ii), correct processes decide unanimously on whether a prime unit is chosen to be a timing unit.
Since the ordering is deterministic after timing units are agreed upon, all correct processes are bound to create the same total order of units. 

\noindent {\bf Integrity} Again by Lemma \ref{lemma-technical-ordering}(ii), we observe that when the function $\Delta$ decides on a unit $U$ at some level $l$, the decision is bound to stay the same on all levels $>l$. 
Since the rest of the protocol is based solely on these fundamental decisions, integrity naturally follows. 

\noindent {\bf Liveness} 
Since new levels in the poset are guaranteed to be created by Lemma \ref{lemma-growth}, we have that the poset will reach any required level eventually. 
Let $U$ be a unit in the poset. If $U$ is issued by a correct process, then $U$ will eventually be gossiped to at least $\frac{2}{3}N$ other correct processes, which will eventually build their units above it. 
After correct processes issue units above $U$ in level $l$, all timing units issued by the correct processes at level $l+1$ will be above $U$.  There exists at least one of these units, say $V$, such that  there is a prime unit of level $l+1$ high above $V$.  As such, from Lemma \ref{lemma-two-levels-above}, we obtain that all prime units of level $l+4$ and above are high above $U$. 
Finally, all timing units chosen among prime units of levels $l+4$ and above will be high above $U$; hence, $U$ will be ordered as soon as any timing unit is chosen and occurs on level $l+4+6$ on average, via Theorem \ref{lemma-choice}.

\end{proof}

\subsection{Fast Unit Validation}\label{subsec-validation}

Consensus on a total ordering of all messages is a complex task and requires multiple rounds of communication between all processes. 
In this section we prove a much faster mechanism, guaranteeing that only one version of each message will be accepted by the network, without the need of ordering of all messages. 
One of its potential use cases is in a token exchange system, where this scheme can provide a method for quick digital asset transfers without the risk of two or more conflicting transactions being accepted. 
The mechanism is based on the following definition:

\begin{definition}
    A unit $U$ will be considered \emph{validated} for a process $A$ if there is a unit $V$ in its local view such that $V$ is high-above $U$ and $V$ is not above any forks of $U$ (even if some exists in $A$'s local view). 
    We will denote this in short as ${\normalfont\val_A(U) = \texttt{True}}$. 
    In such a scenario, the unit $V$  will be referred to as $U$'s \emph{validator}.
\end{definition}

Note that a given unit can have more than one validator.     
A straightforward application of Lemma \ref{lemma-22} gives the following corollary.

\begin{corollary}\label{lemma-onechecked}
    No two forking units can both be validated in the poset.
\end{corollary}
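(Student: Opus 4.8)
The plan is to argue by contradiction using Lemma \ref{lemma-22}. Suppose $U$ and $U'$ are forking units — so they share a creator and are mutually incomparable — and suppose, toward a contradiction, that both are validated in the poset. First I would pin down what "validated in the poset" means here: validation of a unit for a process is witnessed by a single validator in that process's local view, and the global view $\P$ contains every local view, so the assumption amounts to the existence of a unit $V\in\P$ with $V\gg U$ that is not above any fork of $U$, together with a unit $V'\in\P$ with $V'\gg U'$ that is not above any fork of $U'$.

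Next I would feed these four units into Lemma \ref{lemma-22}, taking $U_0=U$, $U_1=V$, $V_0=U'$, $V_1=V'$; the hypotheses $U_1\gg U_0$ and $V_1\gg V_0$ are exactly what the two validators give us. The conclusion of the lemma then yields $V\geq U'$ or $V'\geq U$.

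I would close each case using the fact that $U$ and $U'$ are, by definition of forking units, each a fork of the other. If $V\geq U'$, then $V$ is above a fork of $U$, contradicting that $V$ validates $U$. If instead $V'\geq U$, then $V'$ is above a fork of $U'$, contradicting that $V'$ validates $U'$. Either way we reach a contradiction, so no two forking units can both be validated.

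The logic itself is just a two-line application of Lemma \ref{lemma-22}; the only real care needed — and the thing I would be most careful to get exactly right — is the bookkeeping around the definition: a validator must avoid being above \emph{every} fork of the unit it validates, and one must observe that "$U$ and $U'$ are forking units" genuinely entails "$U'$ is a fork of $U$ and $U$ is a fork of $U'$" in precisely the sense used in the \val\ definition. Once that correspondence is noted, the argument is immediate.
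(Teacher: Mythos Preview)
Your proposal is correct and follows essentially the same argument as the paper: assume two forking units are both validated, apply Lemma~\ref{lemma-22} to the two validator/unit pairs to force one validator above the other fork, and derive a contradiction with the definition of validation. The paper's version is terser, but your added bookkeeping about what ``validated in the poset'' and ``fork of'' mean only makes the same argument more explicit.
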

\begin{proof}
    On the contrary, assume that $U_{0}$ and $U_{1}$ are forking units that are both validated by $V_{0},V_{1}$ respectively.
    Since $V_{0}\gg U_{0} $ and $V_{1}\gg U_{1}$, we obtain by Lemma \ref{lemma-22}, that either $V_{0}\geq U_{1}$ or $V_{1}\geq U_{0}$, contradicting validation of one of the units. 
\end{proof}

The following theorem states that the validation property of a unit satisfies the fundamental requirements of Byzantine agreement.

\begin{theorem}\label{thm-validation}
    Validation of units satisfies the following properties while tolerating up to $\frac{1}{3}$ faulty processes: 
    \begin{enumeratei}
        \item[-] {\bf Correctness} - no two correct processes can decide differently on validity of a unit,
        \item[-] {\bf Integrity} - at most one version of every unit can get validated by a correct process,
        \item[-] {\bf Liveness} - every correct process eventually validates each non-forking unit with probability $1$.
    \end{enumeratei}
\end{theorem}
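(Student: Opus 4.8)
\emph{Plan.} The whole statement reduces to one structural fact that I would prove first: \textbf{validation is monotone and irreversible}. If $V$ validates $U$ in some local view, then both defining conditions --- $V$ high above $U$, and $V$ not above any fork of $U$ --- are determined by the ancestor set $\{W : W \leq V\}$, which is fixed the moment $V$ is created. Since local views are downward closed (units reference parents by hash, and Algorithm \ref{alg_sync} inserts a unit only after its ancestors), it follows that whether $V$ validates $U$ can be decided correctly from \emph{any} local view containing $V$, that the answer never changes, and that it coincides with the answer in the global view; in particular $\#_s\{W : U \leq W \leq V\}$ and the set of forks of $U$ below $V$ are view-independent. With this in hand, Correctness and Integrity are short, and Liveness follows from poset growth.

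\textbf{Correctness.} Suppose a correct process $A$ has $\val_A(U) = \texttt{True}$, witnessed by $V$. Being correct, $A$ keeps gossiping, so $V$ --- hence, by downward closedness, its whole ancestor set including $U$ --- reaches every correct process in finitely many synchronizations. For any correct $B$ with $V \in \P_B$, the structural fact gives $V \gg U$ in $\P_B$ and that $V$ is above no fork of $U$, so $\val_B(U) = \texttt{True}$. The only other way a correct process could rule on $U$ is by validating a fork $U'$ of $U$, which Corollary \ref{lemma-onechecked} forbids. Hence all correct processes eventually reach the same verdict on $U$, and never hold conflicting verdicts.

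\textbf{Integrity and Liveness.} Integrity is Corollary \ref{lemma-onechecked} read globally: if two forks of a unit were validated by correct processes with validators $V,V'$, Lemma \ref{lemma-22} would force $V \geq U'$ or $V' \geq U$, contradicting a validator condition; so at most one version of any unit is ever validated. For Liveness, let $U$ be a non-forking unit known to some correct process. By gossip it reaches all correct processes, which keep building units above it; arguing as in the proof of Lemma \ref{lemma-growth}, knowledge of $U$ spreads exponentially over atomic rounds that recur with probability $1$ under any asynchronous adversary, so with probability $1$ a correct process eventually creates a unit $V$ with $\#_s\{W : U \leq W \leq V\} \geq \frac{2}{3}N$, i.e.\ $V \gg U$. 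As $U$ is non-forking, $V$ is vacuously above no fork of $U$, hence a validator; once $V$ reaches a correct process that process validates $U$, and the Correctness step propagates this to all correct processes.

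\textbf{Main obstacle.} The delicate step is the irreversibility claim: excluding that a correct process validates $U$ via $V$ and only later encounters a fork $U'$ with $V \geq U'$. This is exactly where downward closedness is essential --- such a $U'$ would be an ancestor of $V$, hence already present at validation time and so fatal to it, while a fork created after $V$ can never be an ancestor of $V$. Everything else is a direct application of Lemmas \ref{lemma-22}, \ref{lemma-growth} and Corollary \ref{lemma-onechecked}.
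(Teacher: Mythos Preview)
Your proposal is correct and follows essentially the same route as the paper: Correctness by gossiping the validator $V$ to all correct processes, Integrity as an immediate consequence of Corollary~\ref{lemma-onechecked}, and Liveness via gossip spread and poset growth in the spirit of Lemma~\ref{lemma-growth}. The one thing you do that the paper does not is make explicit the downward-closedness\slash irreversibility argument (that the validator predicate depends only on the ancestor set of $V$, hence is view-independent and cannot be undone by later-arriving forks); the paper's proof leaves this implicit, so your extra paragraph is a genuine clarification rather than a different method.
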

\begin{proof}$  $\par\nobreak\ignorespaces

    {\bf Correctness.} If a correct process $A$ decides that a unit $U$ is valid, then there exists a unit $V \in \P_A$ such that $V$ validates $U$.
    While $A$ continues gossiping, all other correct processes will eventually receive the unit $V$, and hence validate $U$.
    
    \medskip
    
    {\bf Integrity.} This a direct consequence of Corollary \ref{lemma-onechecked}. 
    
    \medskip
    
    {\bf Liveness}. Let $U$ be a non-forking unit created by a process $A$.
    Due to Lemma \ref{lemma-growth}, the unit $U$ will eventually reach $\frac{2}{3}N$ correct processes.
    Due to the random choice of the second parent of every unit, each of these correct processes will inevitably issue a unit above $U$ and gossip it back to $A$, either directly, or via other correct processes.
    The process $A$, after adding $\frac{2}{3}N$ units above $U$ created by different processes, is bound to eventually create a unit $V$ above all of them, and hence validate the unit $U$.
    The unit $V$, will then reach all the other correct processes, again, due to Lemma \ref{lemma-growth}.
\end{proof}

Note that this concept is not meant as a stand-alone mechanism, but rather as an additional feature to be used with the main ordering algorithm. 
Let us consider a case where messages are token transactions and all of the processes shares the ledger with all accounts. 
If a user $A$ has funds on his account and is willing to transfer it, ordering this transaction according to all other transaction is not crucial. 
What is crucial, however, is to ensure that this transaction will be accounted for after all previous $A$'s transactions and that there will be no other version of this transaction accepted by the network (i.e., no conflicting transaction).
This can be achieved utilizing the above definition of validation -- if some $A$'s transaction $T$ is conveyed in a validated unit, and all previous $A$'s transactions are validated, then $T$ can be safely accounted for, provided that $A$ has enough funds on the account.
If however, $A$ has insufficient funds, the system needs to wait for the total ordering, since there is a possibility that there was a transaction issued by some other node then $A$ transferring funds to $A$'s account.

\section{Acknowledgements}
The authors would like to thank Johan Bratt for introducing us to the field, constant support, and many hours spent on valuable discussions.
We would also like to show our gratitude to Matthew Niemerg for reading our paper countless times, proposing changes that greatly improved consistency and readability of the paper, and discussions that helped us clarify notions and proofs contained in the paper.
We are also immensely grateful to Michał Handzlik for his invaluable help during the last phase of writing the manuscript.

This research was funded by the Aleph Zero Foundation.

\end{document}